\newcommand{\Oh}{\mathcal{O}}
\newcommand{\leaves}{\mathsf{L}}
\newcommand{\id}{\mathsf{id}}
\newcommand{\finger}{\mathsf{finger}}
\newcommand{\lca}{\mathsf{lca\_ext}}
\newcommand{\level}{\mathsf{level}}
\newcommand{\head}{\mathsf{head}}
\newtheorem{lemma}{Lemma}
\newtheorem{theorem}{Theorem}
\newtheorem{proposition}{Proposition}
\begin{document}

\title{A Faster Construction of Greedy Consensus Trees}

\author[1]{Paweł Gawrychowski}
\author[1]{Gad M. Landau}
\author[2]{Wing-Kin Sung}
\author[1]{Oren Weimann}
\affil[1]{University of Haifa, Israel}
\affil[2]{National University of Singapore, Singapore}
\date{}

\maketitle

\begin{abstract}
A consensus tree is a phylogenetic tree that captures the similarity between a set of conflicting phylogenetic trees.
The problem of computing a consensus tree is a major step in phylogenetic tree reconstruction.
It also finds applications in predicting a species tree from a set of gene trees.
This paper focuses on two of the most well-known and widely used consensus tree methods: the greedy consensus tree and the frequency difference consensus tree. Given $k$ conflicting trees each with $n$ leaves, the previous fastest algorithms for these problems were $\Oh(k n^2)$ for the greedy consensus tree [J. ACM 2016] and $\tilde \Oh(\min \{ k n^2, k^2n\})$ for the frequency difference consensus tree [ACM TCBB 2016]. We improve these running times to 
$\tilde \Oh(k n^{1.5})$ and $\tilde \Oh(k n)$ respectively. 
\end{abstract}

\thispagestyle{empty}
\clearpage
\setcounter{page}{1}

\section{Introduction}
A {\em phylogenetic tree} describes the evolutionary relationships among a set of $n$ species called taxa. It is an unordered rooted tree whose leaves represent the taxa and whose inner nodes represent their common ancestors. Each leaf has a distinct label from $[n]$. The inner nodes are unlabeled and have at least two children. 
 
Numerous phylogenetic trees, reconstructed from data sources like fossils or DNA sequences, have been published in the literature since the early 1860s. However, the phylogenetic trees obtained from different data sources or using different reconstruction methods result in conflicts (similar though not identical phylogenetic trees over the same set $[n]$ of leaf labels). The conflicts between phylogenetic trees are usually measured by their difference in \emph{signatures}: The signature of a phylogenetic tree $T$ is the set $\{ \leaves(u) : u \in T \}$ where $\leaves(u)$ denotes the set of labels of all leaves in the subtree rooted at node $u$ of $T$ (the set $\leaves(u)$ is sometimes called a {\em cluster}).
To deal with the conflicts between $k$ phylogenetic trees in a systematic manner, the concept of a {\em consensus tree} was invented. Informally, the consensus tree is a single phylogenetic tree that summarizes the branching structure (signatures) of all the conflicting trees. Consensus trees have been widely used in two applications:
\begin{enumerate}
\item {\bf Constructing a phylogenetic tree}: First, by sampling the dataset, we generate $k$ different datasets (for some constant $k$ that can be as large as $10,000$). Then, we reconstruct one phylogenetic tree for each dataset. Finally, we build the consensus tree of these $k$ trees.
\item {\bf Constructing a species tree}: First, a phylogenetic tree (called a gene tree) is reconstructed for each individual gene. Then, the species tree is created by building the consensus tree of all $k$ gene trees.
\end{enumerate}
%
%\renewcommand{\multirowsetup}{\centering}
%\begin{table}
%\caption{The summary of the best time complexity of different consensus tree problems.}
%\begin{tabular}{|l|l|l|} \hline
%{\bf Consensus tree problem} & {\bf Previous best results} & {\bf New results} \\ \hline
%Strict consensus tree & $O(kn)$~\cite{SR81} & \\ \hline
%Adam consensus tree & $O(kn \log n)$~\cite{A72} & \\ \hline
%\multirow{2}{*}{Asymmetric median tree} & NP-hard for $k>2$~\cite{PhillipsWarnow_asymmetric_median_tree_1996} & \\ \cline{2-2}
% & $O(n \log^3 n)$ for $k=2$~\cite{Ramesh2017}& \\ \hline
%Loose consensus tree & $O(kn)$~\cite{JanssonShenSung_consensus_tree_JACM2016} & \\ \hline
%Majority-rule consensus tree & $O(kn)$~\cite{JanssonShenSung_consensus_tree_JACM2016} & \\ \hline
%Greedy consensus tree & $O(kn^2)$~\cite{JanssonShenSung_consensus_tree_JACM2016} & $O(k n^{1.5} \log n)$ \\ \hline
%Majority-rule (+) consensus tree & $O(kn)$~\cite{Jansson_TCBB2016} & \\ \hline
%Frequency difference consensus tree & $O(\min \{ kn^2, kn(k + \log^2 n) \})$~\cite{Jansson_TCBB2016} & $O(k n \log^2 n)$ \\ \hline
%Local consensus tree & $O(kn^3)$ & \\ \hline
%%\multirow{3}{| l |}{R$^\ast$ consensus tree} & $O(kn^3)$ or $O(n^2 \log^{k+2}n)$ for $k>3$ & \\ \hline
%\multirow{3}{*}{R$^\ast$ consensus tree} & $O(kn^3)$ or $O(n^2 \log^{k+2}n)$~\cite{Jansson2014} for $k>3$ & \\ \cline{2-2}
% & $O(n^2)$ for $k=2$~\cite{Jansson2014}& \\ \cline{2-2}
%  & $O(n^2 \log^{4/3} n)$ for $k=3$~\cite{Jansson2014}& \\ \hline
%\end{tabular}
%\label{tab-summary-consensus-results}
%\end{table}

Many different types of consensus trees have been proposed in the literature. For almost all of them,   optimal or near-optimal $\tilde \Oh(kn)$ time constructions are known. These include 
Adam's consensus tree~\cite{A72}, strict consensus tree~\cite{SR81}, loose consensus tree~\cite{B90,JanssonShenSung_consensus_tree_JACM2016}, majority-rule consensus tree~\cite{MM81,JanssonShenSung_consensus_tree_JACM2016}, majority-rule (+) consensus tree~\cite{Jansson_TCBB2016}, and  
asymmetric median consensus tree~\cite{PhillipsWarnow_asymmetric_median_tree_1996,Ramesh2017}\footnote{Constructing the asymmetric median consensus tree was proven to be NP-hard for $k>2$~\cite{PhillipsWarnow_asymmetric_median_tree_1996} and solvable in $\tilde \Oh(n)$ time for $k=2$~\cite{Ramesh2017}.}.  
Two of the most notable exceptions are the frequency difference consensus tree~\cite{software:GFN08} and the greedy consensus tree~\cite{chapter:Bryant03,software:Fel05} whose running time remains quadratic in either $k$ or $n$. In particular, the former can be constructed in $\tilde \Oh(\min \{ k n^2, k^2n\})$ time~\cite{Jansson_TCBB2016} and the later in $\Oh(k n^2)$ time~\cite{JanssonShenSung_consensus_tree_JACM2016}. For more details about different consensus trees and their advantages and disadvantages see the survey in~\cite{chapter:Bryant03}, Chapter~30 in~\cite{book:Fel04}, and Chapter~8.4 in~\cite{book:Sung_10}.

%R$^\ast$ consensus tree~\cite{chapter:Bryant03}: $O(kn^3)$ or $O(n^2 \log^{k+2}n)$~\cite{Jansson2014} for $k>3$, $O(n^2)$ for $k=2$~\cite{Jansson2014}, $\tilde \Oh(n^2)$ for $k=3$~\cite{Jansson2014}.
%
%As $n$ and $k$ can be large, efficient algorithms for constructing consensus trees are required.
%Recently, a number of break-through results are proposed. %Table~\ref{tab-summary-consensus-results} summarizes the results.

In this paper we propose novel algorithms for the frequency difference consensus tree problem and the greedy consensus tree problem. First, we present an $O(k n \log^2 n)$ time deterministic labeling method. The labelling method counts the frequency (number of occurrences) of every cluster $S$ in the input trees. 
Based on this labeling method, we obtain an $O(k n \log^2 n)$ time construction of the frequency difference consensus tree.
Then, for the greedy consensus tree, we present our main technical contribution: a method that uses micro-macro decomposition to
verify if a cluster $S$ is compatible with a tree $T$ in $\Oh(n^{0.5} \log n)$ time and, if so, modify $T$ to include $S$
in $\Oh(n^{0.5}\log n)$ amortized time.
Using this procedure, we obtain an $\Oh(k n^{1.5} \log n)$ time construction of the greedy consensus tree.

\paragraph{The frequency difference consensus tree.} The frequency $f(S)$ of a cluster $S$ (a set of labels of all leaves in some subtree) is the number of trees that contain $S$.
A cluster is said to be {\em compatible} with another cluster if they are either disjoint or one is included in the other. 
A {\em frequent} cluster is a cluster that occurs in more trees than any of the clusters that are incompatible with it. The frequency difference consensus tree is a tree whose signature is exactly all the frequent clusters.

The frequency difference consensus tree was initially proposed by Goloboff et al.~\cite{software:GFN08}, and its relationship with other consensus trees was studied in~\cite{DF-BMP_10}.  
In particular, it can be seen as a refinement of the majority-rule consensus tree~\cite{MM81,JanssonShenSung_consensus_tree_JACM2016}.
Moreover, it is known to give less noisy branches than the greedy consensus tree defined below.
Steel and Velasco~\cite{SV_14} concluded that ``the frequency difference method is worthy of more widespread usage and serious study''.
A naive construction of the frequency difference consensus tree  takes $\Oh(k^2 n^2)$ time.
The free software TNT~\cite{software:GFN08} has implemented a heuristics method to construct it more efficiently.
However, its time complexity remains unknown.

Recently, Jansson et al.~\cite{Jansson_TCBB2016} presented an $\Oh(\min \{ kn^2, k^2n + kn\log^2 n \})$ time construction (implemented in the FACT software package~\cite{FACT}).
Their algorithm first computes the frequency $f(S)$ of every cluster $S$ with non-zero frequency. This is done in total $\Oh(\min \{ kn^2, k^2n \})$ time. They then show that given these computed frequencies, the frequency difference consensus tree can be computed in additional $\Oh(k n \log^2 n)$ time. In Section~\ref{sec:Identifiers} we show how to compute all frequencies in total $\Oh(k n \log^2 n)$ time leading to the following theorem:
%
%Below lemma is a direct consequence of~\cite{Jansson_TCBB2016}.
%\begin{lemma}
%Let $f(C)$ be the number of trees $T_i$ such that $C \in C(T_i)$.
%Suppose we know $f(C)$ for all $C \in \bigcup_{i=1}^k C(T_i)$.
%The frequency difference consensus tree can be computed in $\Oh(k n \log^2 n)$ time.
%\end{lemma}
%
%By the above lemma, we have

\begin{theorem}\label{thm:frequency}
The frequency difference consensus tree of $k$ phylogenetic trees $T_1,T_2,\ldots,T_k$ on the same set of leaves $[n]$  can be computed in $\Oh(k n \log^2 n)$ time.
\end{theorem}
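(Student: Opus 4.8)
The plan is to reduce the theorem to a pure cluster\emph{-}labeling task and then solve that task with a globally shared, canonical data structure. By the reconstruction algorithm of Jansson et al.~\cite{Jansson_TCBB2016}, once the frequency $f(S)$ of every cluster $S$ with $f(S)\ge 1$ is known, the frequency difference consensus tree can be assembled in $O(kn\log^2 n)$ further time. Since every internal node has at least two children, each $T_i$ contributes at most $n-1$ clusters, so at most $k(n-1)$ clusters occur with multiplicity, and for a fixed $S$ the value $f(S)$ is just the number of indices $i$ such that some node $u$ of $T_i$ has $\leaves(u)=S$. Hence it is enough to compute, in $O(kn\log^2 n)$ total time, an identifier $\id(u)\in\mathbb{Z}_{\ge 0}$ for each internal node $u$ of each $T_i$ with the property that $\id(u)=\id(v)$ iff $\leaves(u)=\leaves(v)$: given these, bucketing the $O(kn)$ pairs $(\id(u),i)$ recovers every nonzero frequency in $O(kn)$ time.

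The first difficulty is that $\sum_u|\leaves(u)|$ can be $\Theta(kn^2)$ (already for caterpillars), so clusters must never be listed explicitly; the second is that equal clusters sit at nodes whose subtrees have completely different shapes, so a bottom-up Merkle-style hash of a tree node is not a function of its cluster. The approach I would take is: fix one reference ordering of $[n]$, represent each cluster as a balanced search tree over its labels in that order whose \emph{shape is a deterministic function of the label set alone} (for instance a treap with a fixed injective priority function), and keep all of these search trees inside a single hash-consed pool, so that whenever a search-tree node $(\text{key},\text{left-child-id},\text{right-child-id})$ would be created, a dictionary on that triple either returns the already-minted identifier or mints a fresh one. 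Because the search tree for a set is canonical, two equal clusters produce identical hash-consed trees and hence the same root identifier, which we take as $\id(\cdot)$, and the ``iff'' is then immediate. To build all these trees, process each $T_i$ bottom-up and at an internal node $u$ with children $v_1,\dots,v_d$ form $\leaves(u)=\bigcup_j\leaves(v_j)$ by inserting the smaller sets into the largest; a heavy path decomposition of $T_i$ (the macro $\head$) shows that each label is inserted into a ``small side'' only $O(\log n)$ times along its root-to-leaf path, bounding the number of search-tree node creations per tree by $O(n\log n)$, each with an $O(\log n)$ dictionary operation. Together with the Jansson et al.\ reconstruction step this gives the stated $O(kn\log^2 n)$ bound.

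The step I expect to be the real obstacle --- and the content of Section~\ref{sec:Identifiers} --- is making the identifiers simultaneously canonical and cheap to maintain. A sorted-list canonical form is deterministic but costs $\Theta(kn^2)$; a Karp--Rabin fingerprint merges cheaply but is randomized and is not a function of the cluster; and insisting on a canonical balanced layout tends to destroy cheap updates, since a single insertion can reshuffle the whole layout. Reconciling these --- pinning down a deterministic canonical representation of a dynamic set that still admits $O(\mathrm{polylog})$ insertions, and carefully charging the amortized cost of maintaining it across all $k$ trees and across the shared pool --- is where the technical work lies.
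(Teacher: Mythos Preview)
Your high-level reduction is the same as the paper's: compute small integer identifiers $\id(u)$ with $\id(u)=\id(v)$ iff $\leaves(u)=\leaves(v)$, bucket them to get the frequencies, and then invoke Jansson et al.\ for the final assembly. The idea of a hash-consed canonical representation combined with small-to-large merging is also the right intuition, and is essentially what the paper's Lemma~\ref{lem:dynamicset} does.

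There is, however, an accounting gap that costs you a $\log n$ factor. You write that small-to-large gives $O(n\log n)$ ``search-tree node creations'' per $T_i$, each with one $O(\log n)$ dictionary operation. But $O(n\log n)$ is the number of element \emph{insertions}; a single insertion into a balanced canonical tree of size $m$ must rebuild (or re-hash-cons) the entire root-to-leaf path, i.e.\ $\Theta(\log m)$ nodes, and each of those nodes requires its own dictionary lookup on the triple $(\text{key},\text{left-id},\text{right-id})$. So the cost per tree is $O(n\log n)\cdot O(\log n)\cdot O(\log n)=O(n\log^3 n)$, and the total is $O(kn\log^3 n)$. The paper obtains exactly this bound from the direct approach (Lemma~\ref{lem:dynamicset} plus small-to-large) and explicitly says that more work is needed to reach $O(kn\log^2 n)$.

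Two differences from the paper are worth noting. First, instead of a treap with a fixed priority function --- whose worst-case depth over \emph{all} subsets of $[n]$ you would still have to bound deterministically --- the paper takes the canonical structure to be the \emph{complete} binary tree $B$ on the leaf set $[n]$: for each node $u\in B$ one maintains an identifier of $S\cap D(u)$, computed from the identifiers at the two children. This is automatically of depth $\log n$ and sidesteps the balance issue you flag in your last paragraph. Second, and this is the idea you are missing for the final $\log$, the paper (Theorem~\ref{thm:fastersorting}) processes the nodes of all the $T_i$ in $\log n$ phases indexed by $\level(v)=\lfloor\log_2|\leaves(v)|\rfloor$; within a phase the identifiers at each depth of $B$ are computed \emph{in bulk} by collecting all the $(\text{left-id},\text{right-id})$ pairs and radix-sorting them in $O(kn)$ time, rather than querying a comparison-based dictionary one pair at a time. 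That batching is what replaces the $O(\log n)$ dictionary cost by an amortized $O(1)$ and brings the total down to $O(kn\log^2 n)$.
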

%\begin{proof}
%Using $\Oh(k n \log^2 n)$ time, we can compute $f(C)$ for all $C \in \bigcup_{i=1}^k C(T_i)$.
%Then, using the above lemma, we can compute the frequency difference consensus tree.
%\qed
%\end{proof}

To prove the above theorem, we first develop an $\Oh(k n\log^{2}n)$ time algorithm for
assigning a number $\id(u)\in [k n]$ to every $u\in T_{i}$ such that $\id(u)=\id(u')$ iff
$\leaves(u)=\leaves(u')$. %, for every $u\in T_{i}$ and $u'\in T_{i'}$. 
With these numbers in hand, we can then compute the frequencies of all clusters in $\Oh(k n)$ time using counting sort (since there are only $k n$ clusters with non-zero frequencies, and each was assigned an integer bounded by $kn$). Notice that this also generates a sorted list of all clusters with non-zero frequencies.

\paragraph{The greedy consensus tree.} 
We say that a given collection $\mathcal{C}$ of subsets of $[n]$ is \emph{consistent} if there exists a phylogenetic tree $T$ such that the signature of $T$ is exactly $\mathcal{C}$. 
The greedy consensus tree is defined by the following procedure: We begin with an initially empty $\mathcal{C}$ and then consider all clusters $S$ in decreasing order of their frequencies. In this order, for every $S$, we check if $\mathcal{C}\cup \{S\}$ is consistent, and if so we add $S$ to $\mathcal{C}$. 

The greedy consensus tree is one of the most well-known consensus trees. It has been used in numerous papers such as~\cite{app4,Nature,Jarvis1320,MirarabBayzid,BayzidW13,LeonidasStamatakis,AlexandrosJacquesRenner,PeaseHaakHahn,BayzidMirarabWarnow,905976,LIU2009320,SmithBraun,YangW11} to name a few. For example, in a recent landmark paper in Nature~\cite{Nature}, it was used to construct the species tree from 1000 gene trees of yeast genomes, and in~\cite{app4} it was asserted that {\em ``The greedy consensus tree offers some robustness to gene-tree discordance that may cause other methods to fail to recover the species tree. In addition, the greedy consensus method outperformed our other methods for branch lengths outside the too-greedy zone.''}. 

The greedy consensus tree is an extension of the majority-rule consensus tree, and is sometimes called the extended majority-rule consensus (eMRC) tree.
It is implemented in popular phylogenetics software packages like PHYLIP~\cite{software:Fel05}, PAUP*~\cite{software:Swo03},  MrBayes~\cite{software:RH03}, and RAxML~\cite{Stamatakis14}. 
A naive construction of the greedy consensus tree requires $\Oh(k n^3)$ time~\cite{chapter:Bryant03}. 
To speed this up, these software packages use some forms of randomization methods.
For example, PHYLIP uses hashing to improve the running time.
Even with randomization, the time complexities of these solutions are not known.
Recently, Jansson et al.~\cite{JanssonShenSung_consensus_tree_JACM2016} gave the best known provable construction with an $\Oh(k n^2)$ deterministic running time (their implementation is also part of the FACT package).
In Section~\ref{sec:greedy} we present our main contribution, a deterministic $\tilde \Oh(k n^{1.5})$ construction as stated by the following theorem:
 
\begin{theorem}
The  greedy consensus tree of $k$ phylogenetic trees $T_1,T_2,\ldots,T_k$ on the same set of leaves $[n]$ can be computed in $\Oh(k n^{1.5} \log n)$ time.
\end{theorem}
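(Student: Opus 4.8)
The plan is to reduce the construction to an on‑line data‑structure problem on the growing consensus tree and then to solve that problem with a micro--macro decomposition.

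\emph{Reduction.} By Theorem~\ref{thm:frequency} we obtain, in $\Oh(kn\log^{2}n)$ time, all $\Oh(kn)$ clusters of non‑zero frequency together with their frequencies, sorted non‑increasingly by frequency (ties broken using the identifiers of Section~\ref{sec:Identifiers}, which fixes one realisation of the greedy procedure). We scan this list once, maintaining the accepted clusters implicitly as the phylogenetic tree $T$ that realises them; since the trivial clusters $[n]$ and $\{1\},\dots,\{n\}$ are compatible with everything, we may start with $T$ equal to the star on $[n]$ and thereafter only worry about non‑trivial clusters. For each scanned cluster $S$ we must \textbf{(i)} decide whether $S$ is compatible with every cluster currently in $T$, i.e.\ whether adding $S$ keeps the collection consistent, and \textbf{(ii)} if so, refine $T$ so that $S$ becomes a cluster of $T$. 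At the end $T$ is the greedy consensus tree. A phylogenetic tree on $n$ leaves has fewer than $n$ internal nodes, so at most $n-1$ of the scanned clusters actually cause a refinement; hence it suffices to implement \textbf{(i)} in $\Oh(\sqrt n\log n)$ worst‑case time and \textbf{(ii)} in $\Oh(\sqrt n\log n)$ amortised time, for a total of $\Oh(kn\cdot\sqrt n\log n)=\Oh(kn^{1.5}\log n)$.

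\emph{Characterisation.} Write $S=\leaves_{T_i}(u)$ for the node $u$ of the input tree $T_i$ that produced the cluster, and let $v=\mathrm{lca}_T(S)$ be the lowest common ancestor in $T$ of the leaves of $S$. A short laminar‑family argument shows that $S$ is compatible with all clusters of $T$ if and only if every child $c$ of $v$ satisfies $\leaves_T(c)\subseteq S$ or $\leaves_T(c)\cap S=\emptyset$; when this holds, the refinement inserts one new node $v'$ as a child of $v$ and moves under $v'$ exactly the children $c$ of $v$ with $\leaves_T(c)\subseteq S$ (if that set is empty, a singleton, or all of $v$'s children, then $S$ is already a cluster of $T$ and nothing changes). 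The obstacle is speed: both $|S|$ and the degree of $v$ can be $\Theta(n)$, so we can afford neither to list the leaves of $S$ nor to inspect all children of $v$, and we need an implicitly maintained representation that both locates $v$ and classifies the relevant children in sublinear time.

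\emph{Micro--macro decomposition (the main difficulty).} We maintain $T$ decomposed into $\Oh(\sqrt n)$ connected micro‑trees of size $\Oh(\sqrt n)$, each with $\Oh(1)$ boundary vertices, so that contracting each micro‑tree gives a macro‑tree on $\Oh(\sqrt n)$ nodes; whenever a refinement violates the size bound we rebuild the whole decomposition in $\Oh(n)$ time, which amortises to $\Oh(\sqrt n)$ per refinement since $T$ gains only $n$ nodes in total. The key observation is that $S=\leaves_{T_i}(u)$ is an \emph{interval} in a fixed ordering of the leaves of $T_i$, so that questions of the form ``is this leaf set contained in, disjoint from, or straddled by $S$, and how many of its leaves lie in $S$?'' become interval queries; we maintain, for every input tree $T_i$ separately, balanced search structures over the $T_i$‑ranks of the leaves of each micro‑tree and of each macro subtree, answering such a query in $\Oh(\log n)$ time. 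Testing $S$ then becomes: descend the macro‑tree (guided by the per‑macro‑subtree summaries) to the micro‑tree containing $v$, finish locating $v$ inside that micro‑tree, and classify the children of $v$ --- those living inside $v$'s micro‑tree and those that are roots of hanging macro subtrees --- using $\Oh(\sqrt n)$ summary queries plus $\Oh(\sqrt n)$ work inside one micro‑tree; the refinement then re‑parents the affected macro subtrees and rebuilds at most one micro‑tree. The threshold $\sqrt n$ is exactly the one balancing the $\Oh(n/\sqrt n)$ cost of touching the macro‑structure against the $\Oh(\sqrt n)$ cost inside a micro‑tree, giving $\Oh(\sqrt n\log n)$ per scanned cluster. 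The steps I expect to demand the most care are designing the summaries so that (a) a compatibility query never enumerates the possibly $\Theta(n)$ children of $v$ or the $\Theta(n)$ leaves of $S$ --- in particular, detecting whether \emph{some} child of $v$ straddles $S$ when that child's subtree spans many micro‑trees --- and (b) a refinement, which may move many subtrees, is nevertheless charged only $\Oh(\sqrt n)$ amortised, using that $T$ only grows and is globally rebuilt $\Oh(\sqrt n)$ times; relatedly, all per‑$T_i$ summaries must be recomputed within the $\Oh(n)$ rebuild budget.
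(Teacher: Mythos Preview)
Your high–level reduction (sort clusters by frequency, process them one by one against a growing consensus tree, aim for $\Oh(\sqrt n\log n)$ per cluster) matches the paper, but the core mechanism you propose is \emph{different} from the paper's, and it has a real gap.

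\textbf{What the paper does instead.} The paper never decomposes the consensus tree. It binarises each \emph{input} tree $T_i$ and takes a micro--macro decomposition there; this yields $\Oh(k\sqrt n)$ \emph{static} boundary nodes in total. The key property is that every query cluster $\leaves_{T_i}(u)$ can be written as $\leaves(w)\cup S$ for some boundary node $w$ and a set $S$ with $|S|\le\sqrt n$. For each boundary node $w$ the paper maintains a pointer $\finger(w)$ into $T_c$ (the lca in $T_c$ of the leaves of $w$) together with a pre-stored partition of the children of $\finger(w)$ into full/mixed/empty with respect to $\leaves(w)$. A compatibility test then reads off this stored classification and adjusts it by the at most $\sqrt n$ extra leaves in $S$ using $\Oh(\sqrt n)$ extended-lca queries on a link/cut representation of $T_c$. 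Updates after a successful insertion touch every boundary node whose finger equals $v$, and a ``reconnect the smaller side'' trick (their Lemma~\ref{lem:summed}) keeps the total work at $\Oh(kn^{1.5}\log n)$.

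\textbf{Where your plan is exposed.} You put the micro--macro decomposition on the \emph{consensus} tree $T$, which has unbounded degree (already the initial star has a root of degree $n$). Your query step assumes the children of $v$ split into ``those inside $v$'s micro-tree'' plus $\Oh(\sqrt n)$ macro roots, but that is a property of decompositions of \emph{binary} trees; with high degree, $v$ may have $\Theta(n)$ children that leave its micro-tree. Bundling children into $\Oh(\sqrt n)$ groups does not rescue the argument: to answer ``does some child of $v$ straddle $S$?'' from bundle summaries you would need, per bundle, a structure that detects whether any colour class (child's leaf set) is cut by the interval $[a,b]$ in the $T_i$-order. This is a dynamic coloured-range/straddle query, not the simple interval count you describe, and you give no construction for it; it is precisely the obstacle the paper avoids by precomputing full/mixed/empty at the $\Oh(k\sqrt n)$ boundary nodes of the \emph{input} trees. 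A second, smaller issue: your rebuild recomputes summaries for all $k$ input trees, so it costs $\Theta(kn)$, not $\Oh(n)$; the total ($\Oh(\sqrt n)$ rebuilds times $\Theta(kn)$) still fits in $\Oh(kn^{1.5})$, but the stated per-rebuild budget is wrong. Finally, your amortisation sketch for refinements that move many subtrees (``$T$ only grows and is rebuilt $\Oh(\sqrt n)$ times'') does not by itself pay for a single refinement that re-parents $\Theta(n)$ children; the paper needs, and proves, a separate smaller-side charging lemma for this.

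In short: the interval observation about $S$ is correct and attractive, but decomposing the dynamic, high-degree tree $T$ leaves you with a straddle-detection problem you have not solved. The paper's trick is to move the decomposition to the static, binarised input trees and to cache, at each boundary node, exactly the classification your query would otherwise have to compute from scratch.
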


To prove the above theorem, we develop a generic procedure that takes any ordered list of clusters $S_{1},S_{2},\ldots,S_{\ell}\subseteq [n]$ and tries adding them one-by-one to the current solution $\mathcal{C}$. 
We assume that every cluster $S_{i}$ is specified by providing a tree $T_{i}$ and a node $u_{i}\in T_{i}$ such that $S_{i}=\leaves(u_{i})$. Our procedure requires  $\Oh(n^{0.5} \log n)$ time per cluster (to add this cluster to $\mathcal{C}$ or  assert that it cannot be added) and needs not to assume anything about the order of the clusters. In particular, it does not rely on the clusters being sorted by frequencies.

\section{Computing the Identifiers}\label{sec:Identifiers}

We process the nodes of every $T_{i}$ in the bottom-up order. For every node $u\in T_{i}$, we compute the identifier $\id(u)$ by updating the following structure called the {\em dynamic set equality structure}: 

\begin{lemma}
\label{lem:dynamicset}
There exists a dynamic set equality structure that supports: (1) create a new empty structure in constant time, (2) add $x\in [n]$ to the current set in $\Oh(\log^{2}n)$ time, (3) return the identifier of the current set in constant time, and (4) list all $\ell$ elements of the current set in $\Oh(\ell)$ time.  The structure ensures that the identifiers are bounded by the total number of update operations performed so far, and that two sets are equal iff their identifiers are equal.
\end{lemma}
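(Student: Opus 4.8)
The plan is to store each set in a canonically shaped binary trie over $[n]$ and to merge structurally identical tries into a single object by hash-consing; the identifier of a set is then a small integer attached to the (shared) root of its trie. Fix a binary trie $\mathcal{T}$ of depth $\lceil\log_2 n\rceil$ whose $n$ leaves are identified with $[n]$ through their binary representations. A set $S$ is represented by the \emph{sparse} sub-trie of $\mathcal{T}$ consisting of the leaves in $S$ together with their ancestors, every absent subtree being a null pointer; this representation has depth $\Oh(\log n)$ and, crucially, its shape depends on $S$ alone. All trie nodes ever created are stored in a global hash-cons dictionary, keyed for an internal node by the ordered pair of pointers to its children (a present leaf for $x$ is a single canonical node keyed by $x$), so two internal nodes are \emph{the same node} iff their child pointers agree; an easy bottom-up induction then shows that two sets are represented by the literally identical root pointer iff the sets are equal. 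In addition we keep, per structure and purely for the last two operations, a plain balanced search tree holding the elements of $S$.

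\emph{Create} allocates a structure with null root and the reserved identifier $0$ in $\Oh(1)$ time. To \emph{add} $x$ we first test membership through the side tree in $\Oh(\log n)$ time, and if $x\in S$ we are done. Otherwise we rebuild bottom-up the $\Oh(\log n)$ nodes on the root-to-$x$ path of $\mathcal{T}$: starting from the canonical leaf for $x$, each node on the path is replaced by the node returned by hash-consing its updated child together with its unchanged sibling subtree. This path-copying leaves the previous version of $S$ untouched and yields the canonical root of $S\cup\{x\}$ after $\Oh(\log n)$ dictionary operations. We then look this root up in a second global dictionary that maps each distinct root pointer to an integer identifier, returning the stored value, or the next unused integer if the root is new; finally we record the new root and identifier in the structure and insert $x$ into the side tree. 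Each dictionary is a balanced search tree whose keys are pairs of machine words (pointers into a contiguous node pool), so a dictionary operation costs $\Oh(\log n)$ as long as the total number of operations is polynomial in $n$; hence \emph{add} runs in $\Oh(\log^2 n)$ time. \emph{Identifier} returns the recorded value in $\Oh(1)$ time, and \emph{list} is an in-order traversal of the side tree in $\Oh(\ell)$ time.

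For the stated guarantees: path-copying creates at most one new root pointer per \emph{add} and \emph{create} only reuses the identifier $0$, so at most one fresh integer is issued per update operation and every identifier handed out so far is at most the number of update operations so far. Equality testing is correct by the bottom-up induction already mentioned — the sparse trie is a canonical function of the set, hash-consing collapses equal sub-tries to the same node, so equal sets get the same root pointer and hence the same identifier, while distinct sets get distinct root pointers and the root-to-identifier map is injective.

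The crux is choosing a representation that is simultaneously (i) canonically shaped, (ii) of depth $\Oh(\log n)$ so an insertion copies only $\Oh(\log n)$ nodes, and (iii) such that each update creates only $\Oh(1)$ objects that receive an integer name; a treap with deterministic priorities fails (ii) and a compressed (Patricia) trie fails (ii) as well, which is why the plain sparse binary trie is used, with internal nodes distinguished only by pointer equality so that (iii) holds. Property (iii) is exactly what keeps the identifiers bounded by the number of operations, and it is what lets a downstream counting sort work after a final radix-sort renumbering of the identifiers — which are bounded by $\Oh(kn\log n)$ in the application — into $[kn]$.
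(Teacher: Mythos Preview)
Your proof is correct and takes essentially the same approach as the paper: represent each set by the sparse binary trie over $[n]$ induced by its elements, hash-cons subtries via a global balanced-BST dictionary keyed on (left child, right child) so that an insertion touches only the $\Oh(\log n)$ nodes on one root-to-leaf path at $\Oh(\log n)$ cost each. The only cosmetic differences are that the paper updates identifiers in place (assigning integers to all trie nodes) and uses a plain list plus trie-based membership for enumeration, whereas you use functional path-copying with a separate root-pointer\,$\to$\,integer map and a side balanced BST; your variant actually makes the ``identifier bounded by number of updates'' guarantee cleaner, since you issue at most one fresh integer per add.
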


\begin{proof}
To allow for listing all elements of the current set $S$, we store them in a list. Before adding the new
element $x$ to the list, we need to check if $x\in S$. This will be done using the representation described
below.

Conceptually, we work with a complete binary tree $B$ on $n$ leaves labelled with $0,1,\ldots,n-1$
when read from left to right (without losing generality, $n=2^{k}$), where every node $u$ corresponds
to a set $D(u)\subseteq [n]$ defined by the leaves in its subtree (note that $D(u)=\{i,i+1,\ldots,j\}$, where $0 \leq i \leq j < n$).
Now, any set $S$ is associated with a binary tree $B$, where we write $1$ in a leaf if the corresponding element belongs to $S$
and $0$ otherwise. Then, for every node we define its characteristic vector by writing down the
values written in the leaves of its subtree in the natural order (from left to right). Clearly, the vector
of an inner node is obtained by concatenating the vector of its children. We want to maintain
identifiers of all nodes, so that the identifiers of two nodes are equal iff their characteristic vectors
are identical. If we can keep the identifiers small, then the identifier of the current set can be
computed as the identifiers of the root of $B$. 

Assume that we have already computed the identifiers of all nodes in $B$ and now want to
add $x$ to $S$. This changes the value in the leaf $u$ corresponding to $x$ and, consequently,
the characteristic vectors of all ancestors of $u$. However, it does not change the characteristic
vectors of any other node. Therefore, we traverse the ancestors of $u$ starting from $u$ and
recompute their identifiers. Let $v$ be the current node. If we have never seen the characteristic
vector of $v$ before, we can set the identifier of $v$ to be the largest already used identifier
plus one. Otherwise, we have to set the identifier of $v$ to be the same as the one previously
used for a node with such a characteristic vector. As mentioned above, the characteristic vector
of an inner node $v$ is the concatenation of the characteristic vectors of its children $v_{\ell}$
and $v_{r}$. We maintain a dictionary mapping a pair consisting of the identifier of $v_{\ell}$
and the identifier of $v_{r}$ to the identifier of $v$. The dictionary is global, that is, shared by all
instances of the structure.  Then, assuming that we have already computed the up-to-date identifiers
of $v_{\ell}$ and $v_{r}$, we only need to query the dictionary to check if the identifier of $v$ should be
set to the largest already used identifier plus one (which is exactly when the dictionary does not contain
the corresponding pair) or retrieve the appropriate identifier. Therefore, adding $x$ to $B$
reduces to $\log n$ queries to the dictionary. By implementing the dictionary with balanced search trees,
we therefore obtain the claimed $\Oh(\log^{2}n)$ time for adding an element.

We are not completely done yet, because creating a new complete binary tree $B$ takes $\Oh(n)$
time and therefore the initialization time is not constant yet. However, we can observe that
it does not make sense to explicitly maintain a node $u$ of $B$ such that $S\cap D(u)=\emptyset$,
because we can assume that the identifier of such an $u$ is 0. In other words, we can maintain
only the part of $B$ induced by the leaves corresponding to $S$. Adding an element $x\in S$
is implemented as above, except that we might need to create (at most $\Oh(\log n)$) new nodes on the 
leaf-to-root path corresponding to $x$ (if such a leaf already exists, we terminate the procedure
as $x\in S$ already) and then recompute the identifiers on the whole path as described above.
\end{proof}

Armed with Lemma~\ref{lem:dynamicset}, we process every $T_{i}$ bottom-up. Consider an inner node $v\in T_{i}$
and let $v_{1},v_{2},\ldots,v_{d}$ be its children ordered so that $|\leaves(v_{1})|=\max_{j}|\leaves(v_{j})|$,
that is, the subtree rooted at $v_{1}$ is the largest. Assuming that we have already stored every $\leaves(v_{j})$
in a dynamic set equality structure, we construct a dynamic set equality structure storing $\leaves(v)$
by simply inserting all elements of $\leaves(v_{2})\cup\leaves(v_{3})\cup\cdots \cup \leaves(v_{d})$ into the
structure of $\leaves(v_{1})$. This takes $\Oh(\log^{2} n)$ time per element. Then, we set
$\id(u)$ to be the identifier of the obtained structure. By a standard argument (heavy path decomposition), every leaf of $T_{i}$
is inserted into at most $\log n$ structures and therefore the whole $T_{i}$ is processed in $\Oh(n\log^{3}n)$
time. This gives us the claimed $\Oh(k n\log^{3}n)$ total time.

We now proceed with a faster $\Oh(k n\log^{2}n)$ total time solution. While this is irrelevant
for our $\Oh(k n^{1.5}\log n)$ time construction of the greedy consensus tree,
it implies a better complexity for constructing the frequency difference consensus tree.

We start with a high-level intuition. Lemma~\ref{lem:dynamicset} is, in a sense, more than we
need, as it is not completely clear that we need to immediately compute the identifier of the
current set. Indeed, applying heavy path decomposition we can partially delay computing the identifiers
by proceeding in $\Oh(\log n)$ phases. In each phase, we can then replace the dynamic dictionary
 used to store the mapping with a radix sort. Intuitively, this shaves one log from the time
complexity. We proceed with a detailed explanation.

\begin{theorem}
\label{thm:fastersorting}
The numbers $\id(u)$ can be found for all nodes of the $k$ phylogenetic trees $T_{1},T_{2},\ldots,T_{k}$
in $\Oh(k n\log^{2} n)$ total time.
\end{theorem}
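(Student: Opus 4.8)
The plan is to keep the skeleton of the $\Oh(kn\log^{3}n)$ scheme sketched just above --- heavy path decomposition of every $T_{i}$, with one set structure carried up each heavy path and extended by insertions --- but to remove the dynamic dictionary, which contributes one of the three logarithms, by \emph{deferring} the computation of the canonical identifiers. During the main traversal we stamp the nodes of the conceptual binary tree $B$ of Lemma~\ref{lem:dynamicset} with arbitrary, globally fresh integers, and only afterwards do we turn these into identifiers that are equal exactly when the characteristic vectors agree, doing so in $\Oh(\log n)$ radix-sorting passes, one per level of $B$.

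Concretely, I would maintain the sparse instances of $B$ under insertions by \emph{path copying}: inserting an element $x$ walks the root-to-leaf path of $x$ in $B$, allocates at most $\log n + 1$ new \emph{immutable} node-versions along that path --- each referring to its off-path child through that child's current version and carrying a counter-generated integer --- and updates a ``current version'' pointer for each visited position of $B$; no dictionary is consulted, so an insertion costs $\Oh(\log n)$, and we also keep a plain list of the elements of each structure for linear-time enumeration. I would then process the heavy paths of $T_{1},\dots,T_{k}$ from the deepest to the shallowest: for a heavy path $P=(p_{1},\dots,p_{m})$ read from the leaf $p_{1}$ up to the top $p_{m}$, start the structure from $\{p_{1}\}$ and, moving from $p_{j-1}$ to $p_{j}$, insert the elements of $\leaves(c)$ for each light child $c$ of $p_{j}$ (such a $c$ tops a strictly deeper heavy path, already processed, so its element list is available); once $p_{j}$ has been absorbed, the structure holds exactly $\leaves(p_{j})$, and we record a pointer $\mathrm{rootver}(p_{j})$ to its current root version, which by immutability is never modified afterwards. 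The standard heavy-path bound makes each leaf of each $T_{i}$ participate in $\Oh(\log n)$ insertions and $\Oh(\log n)$ enumerations, so this stage runs in $\Oh(kn\log^{2}n)$ time and produces $\Oh(kn\log^{2}n)$ node-versions in total.

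The deferred step --- where the logarithm is saved --- canonicalizes the node-versions in $\Oh(\log n)$ phases, one per level of $B$, from the leaves up. All present leaf-versions get canonical identifier $1$ and absent children get $0$; in the phase for height $h\ge 1$, each height-$h$ node-version already knows from the preceding phase the canonical identifiers of its two child-versions, so we radix-sort all height-$h$ node-versions by the resulting ordered pair (these identifiers are bounded by the number of node-versions and hence occupy $\Oh(1)$ machine words, making the sort linear) and give equal pairs a common fresh identifier. An easy induction on $h$ shows that two node-versions receive the same canonical identifier iff their characteristic vectors coincide, so putting $\id(u)$ equal to the canonical identifier of $\mathrm{rootver}(u)$ --- every node of every $T_{i}$ lies on exactly one heavy path --- yields $\id(u)=\id(u')$ iff $\leaves(u)=\leaves(u')$; a final counting sort makes these values integers bounded by the number of nodes, i.e.\ $\Oh(kn)$. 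The phases together cost $\Oh(kn\log^{2}n)$, which completes the bound.

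The main obstacle is making the deferral legitimate: after the whole traversal we must still be able to point at the exact $B$-root that stood for $\leaves(p_{j})$ at the moment we passed through $p_{j}$, even though the structure on $P$ keeps changing as we climb, and this bookkeeping must not raise the $\Oh(\log n)$ cost of an insertion. Path copying --- equivalently, an append-only log of immutable node-versions that reference one another --- is exactly what reconciles not destroying old states with spending only $\Oh(\log n)$ per insertion. The remaining points --- that all present leaves of $B$ may safely share one identifier since their positions re-emerge one level up, that all identifiers stay polynomially bounded so every radix sort is linear, and that the heavy path decompositions are themselves obtained in $\Oh(kn)$ total time --- are routine.
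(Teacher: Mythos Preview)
Your proposal is correct and reaches the same $\Oh(kn\log^{2}n)$ bound, but the organization is genuinely different from the paper's.

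The paper does \emph{not} stick with heavy path decomposition. Instead it introduces a ``level'' decomposition: $\level(v)=\lfloor\log|\leaves(v)|\rfloor$, and proceeds in phases $\ell=0,1,\ldots,\log n$. The point of this decomposition is that, within a single level $\ell$, the level-$\ell$ nodes of each $T_{i}$ form paths whose leaf sets partition $[n]$; hence there are exactly $kn$ insertions in that phase (not $kn\log n$), and the identifiers of all lower-level nodes are already fixed before phase $\ell$ begins. Inside a phase the paper then batches by depth of $B$ and radix-sorts, so no persistence is needed: the dependencies are resolved by the phase order itself.

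Your route keeps ordinary heavy paths, accepts $\Oh(kn\log n)$ insertions overall, and uses path copying (a persistent variant of the Lemma~\ref{lem:dynamicset} structure) so that the intermediate $B$-roots can all be named later. You then batch \emph{globally} by height of $B$. The total number of node-versions is still $\Oh(kn\log^{2}n)$, and since every insertion touches exactly one node per height of $B$, each height carries $\Oh(kn\log n)$ versions, so the $\log n$ radix-sort passes sum to $\Oh(kn\log^{2}n)$ as you claim. The induction that equal canonical pairs yield equal characteristic vectors, with identifier $0$ reserved for absent subtrees, is sound.

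What each buys: the paper's level trick avoids persistence entirely but needs the slightly non-standard observation that same-level nodes form leaf-disjoint paths; your approach is arguably more modular (standard heavy paths plus a black-box ``make it persistent, then canonicalize'' step), at the cost of materializing all $\Oh(kn\log^{2}n)$ node-versions before any identifier is known. Both are clean ways to trade the dynamic dictionary for a batched radix sort.
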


\begin{proof}
For a node $v\in T_{i}$, define its level $\level(v)$ to be $\ell$, such that $2^{\ell}\leq |\leaves(v)|<2^{\ell+1}$.
Thus, the levels are between $0$ and $\log n$, level of a node is at least as large as the levels of its
children, and a node on level $\ell$ has at most one child on the same level. We work in phases
$\ell=0,1,\ldots,\log n$. In phase $\ell$, we assume that the numbers $\id(v)$ are already known for
all nodes $v$, such that $\level(v) < \ell$, and want to assign these numbers to all nodes $v$, such
that $\level(v)=\ell$. We will show how to achieve this in $\Oh(k n\log n)$ time, thus proving the
theorem.

Consider all nodes $v$, such that $\level(v)=\ell$. Because every such $v$ has at most one child at the
same level, all level-$\ell$ nodes in $T_{i}$ can be partitioned into maximal paths of the form $p=v_{1}-v_{2}-\ldots -v_{s}$,
where the level of the parent of $v_{1}$ is larger than $\ell$ (or $v_{1}$ is the root of $T_{i}$), and the
levels of all children of $v_{j}$ (except for $v_{j+1}$, if defined) are smaller than $\ell$. 
$v_{1}$ is called the head of $p$ and denoted $\head(p)$. Now, our goal is
to find $\id(v_{j})$ with the required properties for every $j=1,2,\ldots,s$. We will actually
achieve a bit more. The sets $\leaves(\head(p))$ are disjoint in every tree $T_{i}$, and thus
we can define, for every $i$, a partition $\mathcal{P}_{i}=\{P_{i}(1),P_{i}(2),\ldots,P_{i}(t_{i})\}$
of the set of leaves $[n]$, where every $P_{i}(z)$ corresponds to a level-$\ell$ path
$p=v_{1}-v_{2}-\ldots - v_{s}$ in $T_{i}$, such that $\leaves(\head(p))=P_{i}(z)$. 
The elements of $P_{i}(z)$ are then ordered, and we think that $P_{i}(z)$ is a sequence
of length $|P_{i}(z)|$. The ordering is chosen so that, for every $j=1,2,\ldots,s$, the set
$\leaves(v_{j})$ corresponds to some prefix of $P_{i}(z)$. $P_{i}(z)[1..r]$ denotes the prefix
of $P_{i}(z)$ of length $r$. We will assign identifiers to all such prefixes $P_{i}(z)[1..r]$,
for every $i=1,2,\ldots,k$, $z=1,2,\ldots,t_{i}$ and $r=1,2,\ldots,|P_{i}(z)|$, with the property that the identifiers
of two prefixes are equal iff the sets of leaves appearing in both of them are equal.
Then, we can extract the required $\id(v_{j})$ in constant time each by taking the
identifiers of some $P_{i}(z)[1..r]$.

Recall that in the slower solution we worked with a complete binary tree $B$ on $n$ leaves.
For every set $S$ in the collection and every $u\in B$, we computed an identifier of the set
$S \cap D(u)$. This was possible, because if $u_{\ell}$ and $u_{r}$ are the left and the right
child of $u$, respectively, then the identifier of $S \cap D(u)$ can be found using the identifiers
of $S \cap D(u_{\ell})$ and $S \cap D(u_{r})$. We need to show that retrieving these identifiers
can be batched.

Fix a node $u\in B$ and, for every $i=1,2,\ldots,k$ and $z=1,2,\ldots,t_{i}$, consider
all prefixes $P_{i}(z)[1..r]$ for $r=1,2,\ldots,|P_{i}(j)|$. We create a {\em version} of $u$ for every
such prefix. The version corresponds to the set containing all elements of $D(u)$ occurring
in the prefix $P_{i}(z)[1..r]$. We want to assign identifiers to all versions of $u$. 
First, observe that we only have to create a new version if $P_{i}(z)[r] \in D(u)$, as otherwise
the set is the same as for $r-1$. Thus, the total number of required versions, when summed over
all nodes $u\in B$ on the same depth in $B$, is only $k n$, as a leaf of $T_{i}$
creates exactly new version for some $u$. For every node $u\in B$, we will
store a list of all its versions. A version consists of its identifier (such that the identifier
of two versions is the same iff the corresponding sets are equal) together with
the indices $i$, $z$ and $r$. We describe how to create such a list for every node $u\in B$
at the same depth $d$ given the lists for all nodes at depth $d+1$ next.

Let $u_{1}$ and $u_{2}$ be the left and the right child of $u\in B$, respectively.
Then, we need to create a new version of $u$ for every new version of $u_{1}$ and
every new version of $u_{2}$, because for the set corresponding to $u$ to change
either the set corresponding to $u_{1}$ or the set corresponding to $u_{2}$ must
change, and every change is adding one new element. Fix $i$ and $z$ and consider
all versions of $u_{1}$ corresponding to $i$ and $z$ sorted according to $r$.
Let the sorted list of their $r$'s be $a_{1} < a_{2} < \ldots$. Similarly, consider
all versions of $u_{r}$ corresponding to $i$ and $z$ sorted according to $r$,
and let the sorted list of their $r$'s be $b_{1} < b_{2} < \ldots$. For every $x\in \{a_{1},a_{2},\ldots\}\cup \{b_{1},b_{2},\ldots\}$,
we create a new version of $u$ corresponding to $i$, $z$, and $r$ equal to $x$.
This is done by retrieving the version of $u_{1}$ with $r$ equal to $a_{p}$, such that
$a_{p}\leq x$ and $p$ is maximized, and the version of $u_{2}$ with $r$ equal to $b_{q}$,
such that $b_{q}\leq x$ and $q$ is maximized. Then, the identifier of the new version
of $u$ can be constructed from the pair consisting of the identifiers of these versions of
$u_{1}$ and $u_{2}$ (this is essentially the same reasoning as in the slower solution).
We could now use a dictionary to map these pairs to identifiers. However, we can also
observe that, in fact, we have reduced finding the identifiers of all versions of
all nodes $u\in B$ at the same depth $d$ to identifying duplicates on a list of $k n$
pairs of numbers from $[k n]$. This can be done by radix sorting all pairs
in linear time (more precisely, $\Oh(k n)$ time and $\Oh(k n)$ space),
and then sweeping through the sorted list while assigning the identifiers.
This takes only $\Oh(k n)$ time for every depth $d$, so $\Oh(k n\log n)$
for every level as claimed.
\end{proof}

The proof of Theorem~\ref{thm:frequency} follows immediately from Theorem~\ref{thm:fastersorting}.

\section{Simulating the Greedy Algorithm}\label{sec:greedy}

We consider $k$ trees $T_1, \ldots, T_k$ on the same set of leaves $[n]$, and 
assume that every node $u$ has an identifier $\id(u)$ such that $\id(u)=\id(u')$ iff $\leaves(u)=\leaves(u')$.
We next develop a general method for maintaining a 
solution $\mathcal{C}$ (i.e., a set of compatible identifiers) so that, given any node $u\in T_i$, we are able to efficiently check if
$\leaves(u)$ is compatible with $\mathcal{C}$, meaning that $\mathcal{C}\cup\leaves(u)$
is consistent, and if so add $\leaves(u)$ to $\mathcal{C}$. Our method does not rely on the order in which the sets arrive and in particular can be used to run the greedy algorithm. 

We represent $\mathcal{C}$ with a phylogenetic tree $T_c$ such that $\mathcal{C}=\{\leaves(u): u\in T_c\}$.
$T_c$ is called the \emph{current consensus tree}. By Lemma 2.2 of~\cite{JanssonShenSung_consensus_tree_JACM2016}, 
$S$ is compatible with $\mathcal{C}$
iff there exists a node $v\in T$ such that for every child $v'$ of $v$ either
$\leaves(v')\cap S = \emptyset$ or $\leaves(v')\subseteq S$. Also, adding $\leaves(u)$ to
$\mathcal{C}$ can be done by creating a new child $w$ of $v$ and reconnecting every original
child $v'$ of $v$ such that $\leaves(v')\subseteq S$ to the new $w$.
This is illustrated in Figure~\ref{fig:adding}.

\begin{figure}[h]
\begin{center}
\includegraphics{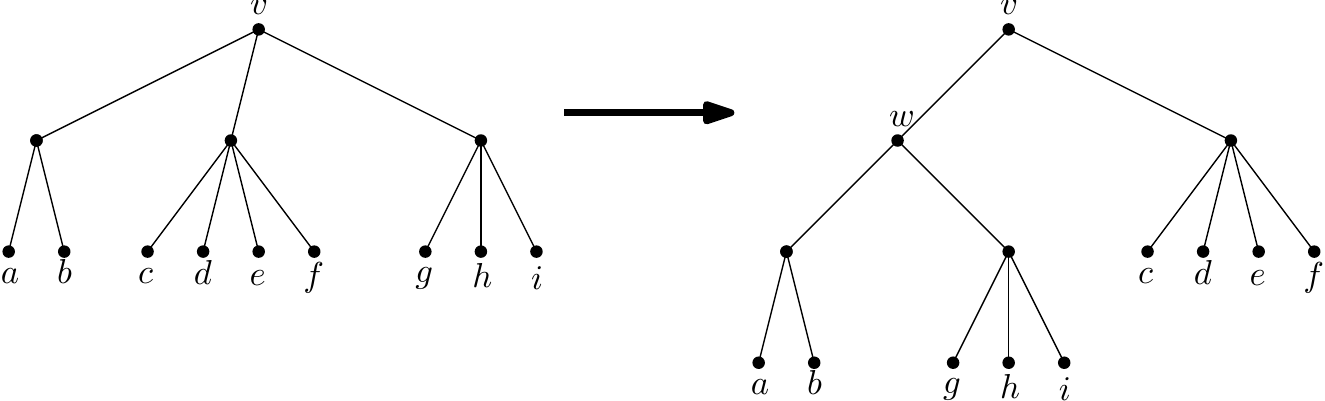}
\end{center}
\caption{Adding $\{a,b,g,h,i\}$ to $S$.\label{fig:adding}}
\end{figure}

Initially, $T_c$ consists only of $n$ leaves attached to the common root (which corresponds to
$\mathcal{C}=\{\{x\} : x\in [n]\}$). Our goal is to to maintain some additional information
so that given any node $u\in T_i$, we can check if $\leaves(u)$ is compatible with $\mathcal{C}$
in $\Oh(n^{0.5}\log n)$ time. After adding $\leaves(u)$ to $\mathcal{C}$ the information will be
updated in amortized $\Oh(k n^{0.5}\log n)$ time. To explain the intuition, we first
show how to check if $\leaves(u)$ is compatible with $\mathcal{C}$ in roughly $\Oh(|\leaves(u)|)$
time.

\begin{figure}[h!]
\begin{center}
\includegraphics{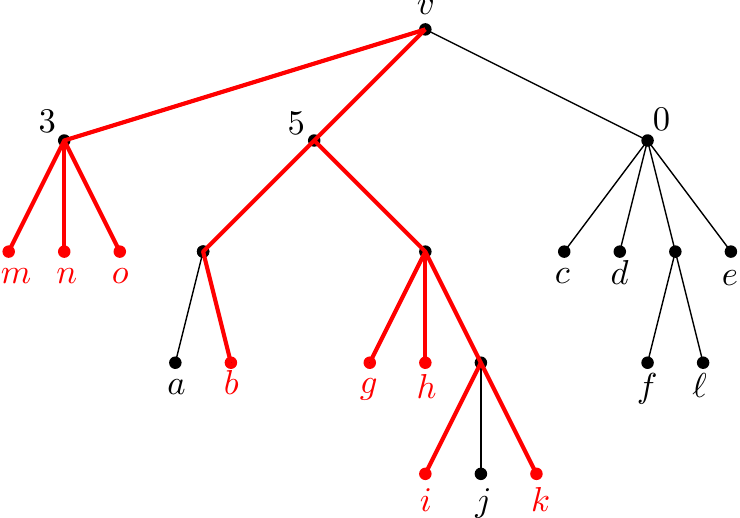}
\end{center}
\caption{Checking if $S=\{m,n,o,b,g,h,i,k\}$ is compatible with $T_{c}$. Leaves corresponding to the
elements of $S$ are shown in red and their lca is $v$. $S$ is not compatible with $T_{c}$ because the counter of the middle child of $v$ is equal to 5 yet there are 7 leaves in its subtree.\label{fig:notconsistent}
}
\end{figure}

Let $\leaves(u)=\{\ell_{1},\ell_{2},\ldots,\ell_{s}\}$ and let $u_{i}$ be the leaf of $T_{c}$ labelled
with $\ell_{i}$. Then, $v$ must be an ancestor of every $u_{i}$. We claim that, in fact, $v$ should
be chosen as the lowest common ancestor of $u_{1},u_{2},\ldots,u_{s}$, because if all $u_{i}$'s
are in the same subtree rooted at a child $v'$ of $v$ then we can as well replace $v$ with $v'$.
So, we can find $v$ by asking $s-1$ lca queries: we start with $u_{1}$ and then iteratively jump
to the lca of the current node and $u_{i}$. Assuming that we represent $T_{c}$ in such a way
that an lca query can be answered efficiently, this takes roughly $\Oh(s)$ time. 
Then, we need to decide if for every child $v'$ of $v$ it holds that $\leaves(v')\subseteq\leaves(u)$
or $\leaves(v')\cap\leaves(u)=\emptyset$. This can be done by computing, for every such $v'$,
how many $u_i$'s belong to the subtree rooted at $v'$, and then checking if this number
is either 0 or $|\leaves(v')|$. To compute these numbers, we maintain a counter for every $v'$.
Then, for every $u_{i}$ we retrieve the child $v'$ of $v$ such that $u_{i}$ belongs to the subtree
rooted at $v'$ and increase the counter of $v'$. Assuming that we represent $T_{c}$ so that
such $v'$ can be retrieved efficiently, this again takes roughly $\Oh(s)$ time.
Finally, we iterate over all $u_{i}$ again, retrieve the corresponding $v'$ and check if
its counter is equal to $|\leaves(v')|$ (so our representation of $T_{c}$ should also allow
retrieving the number of leaves in a subtree). If not, then $\leaves(u)$ is not compatible
with $\mathcal{C}$, see Figure~\ref{fig:notconsistent}. Otherwise, we create the new node $w$
and reconnect to $w$ all children $v'$ of $v$, such that the counter of $v'$ is equal to $|\leaves(v')|$.

We would like to avoid explicitly iterating over all elements of $\leaves(u)$. This will be done
by maintaining some additional information, so that we only have to iterate over up to $n^{0.5}$
elements. To explain what is the additional information we need the (standard) notion of a
\emph{micro-macro decomposition}. Let $b$ be a parameter and consider a binary tree
on $n$ nodes. We want to partition it into $\Oh(n/b)$ node-disjoint subtrees called
\emph{micro trees}. Each micro tree is of size at most $b$ and contains at most two
\emph{boundary nodes} that are adjacent to nodes in other micro trees. One of these
boundary nodes, called the top boundary node, is the root of the whole micro tree,
and the other is called the bottom boundary node. Such a partition is always possible
and can be found in $\Oh(n)$ time.

We binarize every $T_i$ to obtain $T'_i$. Then, we find a micro-macro decomposition
of $T'_i$ with $b=n^{0.5}$. By properties of the decomposition we have the following:

\begin{proposition}
For any $u\in T_i$ such that $|\leaves(u)|>n^{0.5}$, there exists a boundary node $v\in T'_i$ such that $\leaves(u)$ can be
obtained by adding at most $n^{0.5}$ elements to $\leaves(v)$. Furthermore, $v$ and these up to
$n^{0.5}$ elements can be retrieved in $\Oh(n^{0.5})$ time after $\Oh(n)$ preprocessing.
\end{proposition}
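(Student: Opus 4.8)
The plan is to exploit the structure of the micro-macro decomposition of $T'_i$ directly. Recall that each micro tree has a top boundary node $t$ and (possibly) a bottom boundary node $b$, where $b$ hangs below $t$ inside the micro tree, and the only nodes of a micro tree that can be adjacent to nodes outside it are $t$ and $b$. Equivalently, the macro tree — obtained by contracting each micro tree to an edge (or a node) — has $\Oh(n/b)$ nodes. I would first observe that for any node $u\in T'_i$, the node $u$ lies in exactly one micro tree $M$; let $t$ be the top boundary node of $M$. Then $\leaves(u)\supseteq\leaves(w)$ for every child $w$ of $u$ that is not on the $u$-to-$(\text{bottom boundary})$ path, and in general $\leaves(u)$ decomposes as a disjoint union of (i) the leaves that lie strictly inside $M$ and in the subtree of $u$, together with (ii) $\leaves(v)$ for the unique boundary node $v$ (a bottom boundary node of $M$, hence a top boundary node of a child micro tree) that is a descendant of $u$ through the bottom-boundary connection — if such a $v$ exists and is in the subtree of $u$.

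Next I would use the size bound. If $|\leaves(u)|>n^{0.5}=b$, then the subtree of $u$ cannot be entirely contained in a single micro tree (which has at most $b$ nodes, hence at most $b$ leaves), so $u$'s subtree must reach below $M$ through the bottom boundary node of $M$; call the corresponding boundary node $v$ (the top boundary node of the child micro tree hanging off $M$'s bottom boundary, restricted to the part inside $u$'s subtree — but since $u$ is an ancestor of $M$'s bottom boundary or equals a node on that path, this $v$ is well-defined). Then $\leaves(v)$ accounts for all leaves of $u$'s subtree except those lying strictly inside $M$ itself, and there are at most $b=n^{0.5}$ such leaves because $|M|\le b$. This gives exactly the claimed decomposition: $\leaves(u)=\leaves(v)\cup X$ with $|X|\le n^{0.5}$.

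For the algorithmic part, during the $\Oh(n)$ preprocessing I would compute the micro-macro decomposition of $T'_i$, and for each micro tree record its top and bottom boundary nodes and, for every node $u$ inside a micro tree, the (at most $b$) leaves of the micro tree that lie in the subtree of $u$ — this can be done by a single bottom-up pass within each micro tree, storing for each $u$ a pointer into a per-micro-tree list so that the relevant sublist can be read off in time proportional to its length. I would also store, for each node $u$, the boundary node $v$ below it (found by walking from $u$ down the unique path toward the micro tree's bottom boundary, which is itself a boundary node); this pointer is computed once per node. Then, given $u$ with $|\leaves(u)|>n^{0.5}$, we return $v$ and we enumerate $X$ by listing the $\le b$ micro-tree leaves recorded for $u$ and discarding those already in $\leaves(v)$ — all in $\Oh(n^{0.5})$ time.

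The main obstacle is the bookkeeping needed to make "the boundary node $v$ below $u$" precise and to confirm $X$ really has size at most $n^{0.5}$: one must be careful that $u$'s subtree might spill into several child micro trees (not just one), since a bottom boundary node can have out-of-micro-tree children on both sides after binarization. The clean resolution is that $u$ lies on the top-boundary-to-bottom-boundary path of its own micro tree $M$ (if it weren't, its whole subtree would be inside $M$, forcing $|\leaves(u)|\le b$), so all of $u$'s descendants outside $M$ descend from $M$'s single bottom boundary node; taking $v$ to be that bottom boundary node (equivalently, the top boundary node of the unique child micro tree) captures every leaf of $u$ outside $M$, and the leaves inside $M$ number at most $|M|\le b=n^{0.5}$. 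Everything else is a routine linear-time pass.
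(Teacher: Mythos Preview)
Your argument is correct and is exactly the kind of reasoning the paper leaves implicit under ``by properties of the decomposition.'' The paper states the proposition without proof, so there is nothing to compare against; your write-up simply fills in the details.

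Two small remarks on presentation. First, the ``discarding those already in $\leaves(v)$'' step is cleaner if you avoid it altogether: since $u$ lies on the top--to--bottom-boundary path of its micro tree $M$ and $v$ is the bottom boundary node, the set $X=\leaves(u)\setminus\leaves(v)$ consists precisely of the leaves hanging off that path strictly between $u$ and $v$, all of which lie inside $M$. An Euler tour of $M$ (or a single list of $M$'s leaves ordered by the path node from which they hang) lets you read off $X$ as one or two contiguous ranges in $\Oh(|X|)$ time with $\Oh(|M|)$ preprocessing per micro tree, so $\Oh(n)$ overall. Second, it is worth saying explicitly that a node $u\in T_i$ is identified with its image in the binarized tree $T'_i$, where $\leaves(u)$ is unchanged; you do this implicitly, but the proposition is phrased with $u\in T_i$ and $v\in T'_i$.
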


\begin{figure}[h]
\begin{center}
\includegraphics{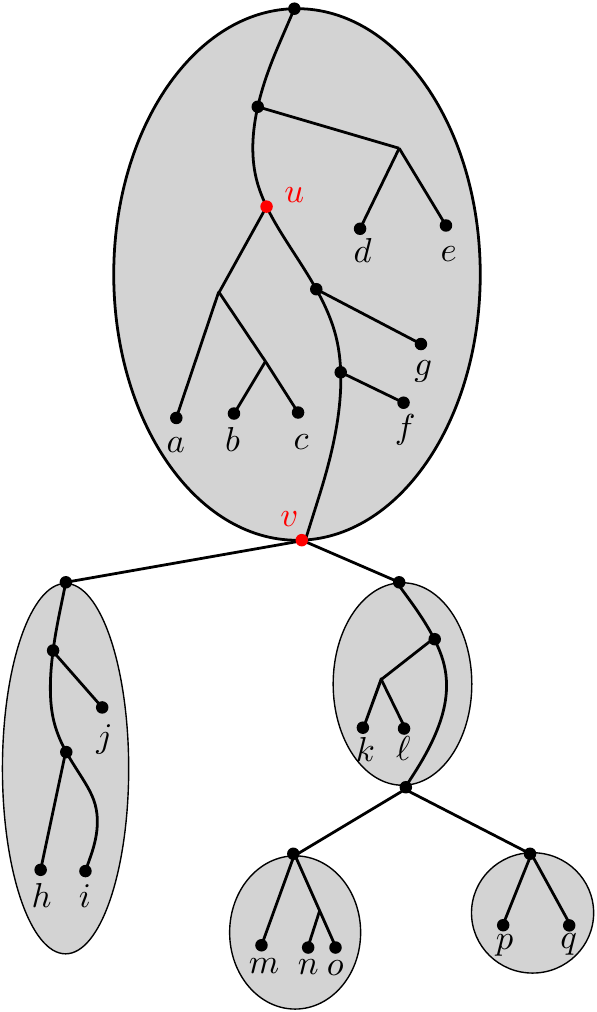}
\end{center}
\caption{A schematic illustration of the micro-macro decomposition. $v$ is a boundary node
and $\leaves(v)=\{h,i,j,k,\ell,m,n,o,p,q\}$. Then, $\leaves(u)=\{a,b,c,f,g,h,i,j,k,\ell,m,n,o,p,q\}$ 
so $\leaves(u)=\leaves(v)\cup\{a,b,c,f,g\}$.}
\end{figure}

The total number of boundary nodes is only $\Oh(k n^{0.5})$. For each such boundary
node $u$, we maintain a pointer to a node $\finger(u)\in T_c$ called the \emph{finger} of $u$.
$\finger(u)$ is a node $v\in T_c$ such that $\leaves(u)\subseteq \leaves(v)$
but, for every child $v_i$ of $v$,  $\leaves(u)\not\subseteq \leaves(v_i)$.

\begin{proposition}
The node $\finger(u)$ is the lowest common ancestor in $T_c$ of all leaves with labels belonging
to $\leaves(u)$.
\end{proposition}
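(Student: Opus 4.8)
The plan is to verify directly that the lowest common ancestor of the relevant leaves satisfies the two conditions in the definition of $\finger(u)$, and then to observe that these two conditions pin down a unique node of $T_c$; combining the two gives the claim. Throughout I use that the leaf set of $T_c$ is exactly $[n]$ (it is initialized that way and the update operations only reconnect internal structure, never delete a leaf), so "the leaves with labels belonging to $\leaves(u)$" are well defined.

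Write $\leaves(u)=\{\ell_{1},\ldots,\ell_{s}\}$, which is nonempty since every subtree contains a leaf, let $w_{1},\ldots,w_{s}$ be the leaves of $T_c$ carrying these labels, and let $w$ be their lowest common ancestor in $T_c$. First I would show that $w$ is a finger of $u$. Since $w$ is an ancestor of every $w_{i}$, each $\ell_{i}$ lies in $\leaves(w)$, so $\leaves(u)\subseteq\leaves(w)$. For the second condition, take any child $v_{i}$ of $w$: if $\leaves(u)\subseteq\leaves(v_{i})$ held, then every $w_{j}$ would lie in the subtree rooted at $v_{i}$, making $v_{i}$ a common ancestor of $w_{1},\ldots,w_{s}$ strictly below $w$, contradicting that $w$ is their \emph{lowest} common ancestor. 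Hence $\leaves(u)\not\subseteq\leaves(v_{i})$ for every child $v_{i}$ of $w$, and $w$ satisfies the definition.

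It remains to argue uniqueness. Suppose $v\in T_c$ satisfies both conditions in the definition of $\finger(u)$. From $\leaves(u)\subseteq\leaves(v)$ and $s\ge 1$ we get that $v$ is an ancestor of each $w_{i}$, i.e. a common ancestor of $w_{1},\ldots,w_{s}$; by definition of the lowest common ancestor, $v$ is therefore a (not necessarily proper) ancestor of $w$. If $v\neq w$, then $w$ lies in the subtree rooted at some child $v_{i}$ of $v$, whence $\leaves(u)\subseteq\leaves(w)\subseteq\leaves(v_{i})$, contradicting the second condition for $v$. Thus $v=w$, so $\finger(u)$ is exactly the lowest common ancestor in $T_c$ of the leaves labelled by $\leaves(u)$. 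This argument is elementary and I do not expect a genuine obstacle; the only point worth stating carefully is that it is precisely the word "lowest" that forces the second condition, while the first condition alone already constrains any candidate node to lie on the root-to-$w$ path.
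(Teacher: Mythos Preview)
Your argument is correct and is exactly the natural verification one would give; the paper in fact states this proposition without proof, treating it as an immediate consequence of the definition of $\finger(u)$. Your two-step check (that the lca satisfies the defining conditions, and that those conditions force uniqueness along the root-to-lca path) is the standard justification and fills in precisely what the paper leaves implicit.
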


Additionally, the children of $\finger(u)$
are partitioned into three groups: (1) $v_i$ such that $\leaves(v_i)\subseteq\leaves(v)$,
(2) $v_i$ such that $\leaves(v_i)\cap\leaves(v)=\emptyset$, and (3) the rest. We call them
full, empty, and mixed, respectively (with respect to $u$). For each group we maintain
a list storing all nodes in the group, every node knows its group, and the group knows it size.
Additionally, every group knows the total number of leaves in all subtrees rooted at
its nodes.

We also need to augment the representation $T_c$ to allow for efficient \emph{extended lca
queries}. The lowest common ancestor (lca) of $u$ and $v$ is the leafmost node $w$ that is an
ancestor of both $u$ and $v$. An extended lca query, denoted $\lca(u,v)$, returns the first
edge on the path from the lca of $u$ and $v$ to $u$, and -1 if $u$ is an ancestor of $v$. For example, in Figure~\ref{fig:notconsistent}, $\lca(v,k)=-1$ whereas $\lca(n,k)$ is the edge between $v$ and its leftmost child. 

\begin{lemma}
\label{lem:lca}
We can maintain a collection of rooted trees under: (1) create a new tree consisting of a single
node, (2) make the root of one tree a child of a node in another tree, (3) delete an edge
from a node to its parent, (4) count leaves in the tree containing a given node,
and (5) extended lca queries, all in $\Oh(\log n)$ amortized time, where $n$
is the total size of all trees in the collection.
\end{lemma}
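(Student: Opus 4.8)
The plan is to build the structure on top of Sleator and Tarjan's link--cut trees, which maintain a forest of rooted trees under $\mathsf{link}$, $\mathsf{cut}$, root-finding, and $\mathsf{access}$ in $\Oh(\log n)$ amortized time, keeping each \emph{preferred path} in a balanced (splay) tree whose nodes are ordered by depth; after $\mathsf{access}(x)$ the entire root-to-$x$ path becomes one such splay tree with $x$ at its root. Operation~(1) is trivial, operation~(2) is exactly $\mathsf{link}$ of a tree-root under a node, and operation~(3) is $\mathsf{cut}(v)$ after recovering the parent $p$ of $v$ as the deepest proper ancestor present in $v$'s splay tree following $\mathsf{access}(v)$. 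So the work is to add the two non-standard primitives, (4) and (5), to this machinery.

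For operation~(4) I would use the textbook subtree-sum augmentation of link--cut trees. Keep at every node $x$ an integer $\mathsf{ch}(x)$ equal to its number of children in the represented forest (incremented on $\mathsf{link}$, decremented on $\mathsf{cut}$) and a bit $\mathsf{leaf}(x)=[\mathsf{ch}(x)=0]$. Maintain at every splay node the sum of $\mathsf{leaf}$ over its whole represented subtree; as usual this is done by storing, alongside the splay-subtree sum, an auxiliary sum over the subtrees hanging off by path-parent pointers, refreshed whenever $\mathsf{access}$ switches a preferred child. On $\mathsf{link}(r,w)$ we increment $\mathsf{ch}(w)$ and, if it became $1$, flip $\mathsf{leaf}(w)$ and refresh the aggregates through $w$ (inside the $\mathsf{access}(w)$ that $\mathsf{link}$ already performs); $\mathsf{cut}(v)$ does the symmetric update at $p$. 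To answer~(4) for a node $x$, call $\mathsf{access}(x)$: then $x$ is the root of the splay tree holding the entire root-to-$x$ path, so the aggregate stored at $x$ already counts every leaf of the tree, in $\Oh(\log n)$ amortized time.

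For operation~(5), $\lca(u,v)$, I would first obtain $w=\mathrm{lca}(u,v)$ --- a standard link--cut query (run $\mathsf{access}(u)$, then $\mathsf{access}(v)$ while recording the root of the last preferred path merged into) costing $\Oh(\log n)$ amortized. If $w=u$ then $u$ is an ancestor of $v$ and we return $-1$. Otherwise $w$ is a proper ancestor of $u$, and we need the child $c$ of $w$ on the $w$-to-$u$ path. I would now call $\mathsf{access}(u)$ again: this rebuilds the root-to-$u$ path as a single depth-ordered splay tree, which necessarily contains $w$ (an ancestor of $u$) no matter how the $\mathrm{lca}$ computation disturbed the preferred paths; splaying $w$ there, the nodes deeper than $w$ are exactly $c,\dots,u$, so $c$ is their minimum-depth element, found by walking down the deep side from $w$ and splaying, in $\Oh(\log n)$; we return the edge $(w,c)$. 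Each of (1)--(5) thus uses $\Oh(1)$ calls to $\mathsf{access}$/splay plus $\Oh(1)$ pointer updates, giving the claimed bound.

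The routine parts are the $\mathsf{ch}$/$\mathsf{leaf}$ bookkeeping (one must only trigger the $\mathsf{leaf}$ flip exactly when $\mathsf{ch}$ crosses $0$) and the subtree-sum augmentation, both standard. The one genuinely delicate step is recovering $c$ in the extended-lca query: because computing $w$ rearranges the preferred-path structure, the correctness hinges on re-running $\mathsf{access}(u)$ and using that $w$, being an ancestor of $u$, lies on the freshly built root-to-$u$ splay path, so that $c$ is simply the depth-successor of $w$ there; I expect getting this orientation and the $w=u$ degenerate case exactly right to be the main thing to verify carefully.
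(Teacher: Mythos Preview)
Your proposal is correct and follows essentially the same approach as the paper: Sleator--Tarjan link/cut trees give (1)--(3) directly, the standard non-preferred-subtree-sum augmentation (storing at each node the leaf count in its non-preferred subtrees, aggregated over each splay subtree) gives (4), and (5) is built on top of $\mathsf{access}$. The one minor difference is in (5): the paper reads off the answer in a single pass by first calling $\mathsf{access}(v)$ and then $\mathsf{access}(u)$ while remembering the topmost node of the last preferred path visited before merging into the root's preferred path---that node is exactly the desired child $c$---whereas you compute the ordinary lca $w$ first, then re-$\mathsf{access}(u)$ and recover $c$ as the depth-successor of $w$ in the resulting splay tree; both variants are $\Oh(\log n)$ amortized, the paper's being slightly more direct and yours perhaps easier to verify.
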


\begin{proof}
We apply the link/cut trees of Sleator and Tarjan~\cite{SleatorT83} to maintain the collection.
This immediately gives us the first three operations. 
To implement computing the size and $\lca(u,v)$ queries we need to explain the internals of link/cut
trees. Each tree is partitioned into node-disjoint paths consisting of \emph{preferred edges}. Each node has
at most one such edge leading to its preferred child. For each maximal path consisting of preferred
edges, called a preferred path, we store
its nodes in a splay tree, where the left-to-right order on the nodes of the splay tree corresponds
to the top-bottom order on the nodes in the rooted tree. Each such splay tree stores a pointer to
the topmost node of its preferred path. Additionally, each node of the tree stores a pointer to its current
parent. All operations on a link/cut tree use the access procedure. Its goal is to change the
preferred edges so that there is a preferred path starting at the root and
ending at $v$. This is done by first shortening the preferred path containing $v$ so that
it ends at $v$. Then, we iteratively jump to the topmost node $u$ of the current preferred path
and make $u$ the preferred child of its parent. Whenever the preferred child of a node
changes, we need to update the splay tree representing the nodes of the preferred path.
Even though the number of jumps might be $\Omega(n)$, it can be shown that all these
updates take $\Oh(\log n)$ amortized time.

Now we can explain how to implement $\lca(u,v)$. First, we access node $v$. This gives
us a preferred path starting at the root and ending at $v$. Second, we access node $u$ while
keeping track of the topmost nodes of the visited preferred paths. If $u$ is on the same preferred
path as $v$, then $u$ is an ancestor of $v$. Otherwise, let $p$ be the preferred path visited just before
reaching the preferred path starting at the root of the whole tree. Then the topmost node of $p$
(before changing the preferred child of its parent) should be returned as $\lca(u,v)$. Thus,
the complexity of $\lca(u,v)$ is the same as the complexity of access.

To compute the size of a tree, we augment the splay trees. Every node of a preferred path
stores the total number of leaves in all subtrees attached to it through non-preferred edges
(plus one if the node itself is a leaf). Additionally, every node of a splay tree stores the sum
of the numbers stored in its subtree, or in other words the total number of leaves in all subtrees
attached to its corresponding contiguous fragment of the preferred path through non-preferred
edges. The sums stored at the nodes of the splay tree are easily maintained during rotations.
We also need to update the total number of leaves after making a preferred edge non-preferred
or vice versa. This is easily done by accessing the sum stored at the root of the splay tree.
To access the number the leaves in the tree containing $v$, we need to access $v$. This makes
all of $v$'s children non-preferred and makes $v$ the root of its splay tree. Hence, the number stored at $v$ is the total number of leaves in the tree containing $v$.
\end{proof}

We next show how to efficiently check for any $u$ if $\leaves(u)$ is compatible with $\mathcal{C}$. By the following lemma, this can be done in $\Oh(n^{0.5}\log n)$ time, assuming we have stored  the required additional information. Recall that this additional information includes:
\begin{enumerate}
\item The tree $T_c$ maintained using Lemma~\ref{lem:lca}.
\item For every boundary node $w$,
we store $\finger(w)$. 
\item For every boundary node $w$, we store three lists containing the full, the mixed, and the empty children of $w$ respectively. Each list also stores the total number of leaves in all subtrees rooted at its nodes.
\end{enumerate}

\begin{lemma}
\label{lem:query}
Assuming access to the above additional information, given any node $u\in T_i$
we can check if $\leaves(u)$ is compatible with $\mathcal{C}$ in $\Oh(n^{0.5}\log n)$ time.
\end{lemma}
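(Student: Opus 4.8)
The plan is to reduce the compatibility check for an arbitrary $u\in T_i$ to a small number of operations on $T_c$, using the stored finger of a boundary node to avoid iterating over the whole of $\leaves(u)$. First I would split into two cases according to the size of $\leaves(u)$. If $|\leaves(u)|\leq n^{0.5}$, then we can afford the naive approach sketched in the text: list the at most $n^{0.5}$ leaf labels $\ell_1,\ldots,\ell_s$ of $\leaves(u)$ (we have the list of elements because every node of $T_i$ is a node of a micro tree, and for micro-tree-internal nodes the leaf set is small and explicitly computable; in general the identifiers and the tree structure let us recover the $\ell_i$'s in $\Oh(s)$ time), compute $v=\lca$ of the corresponding leaves in $T_c$ by $s-1$ extended-lca queries via Lemma~\ref{lem:lca}, bucket the $u_i$'s by the child $v'$ of $v$ whose subtree contains them (the first edge returned by $\lca(v,u_i)$ identifies that child), increment a counter per such $v'$, and finally check that every touched $v'$ has counter equal to its subtree leaf count (also provided by Lemma~\ref{lem:lca}); compatibility holds iff this test passes. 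Each of the $\Oh(s)=\Oh(n^{0.5})$ steps costs $\Oh(\log n)$, for a total of $\Oh(n^{0.5}\log n)$.

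The interesting case is $|\leaves(u)|>n^{0.5}$. Here I would invoke Proposition~1 (the micro-macro decomposition) to obtain, in $\Oh(n^{0.5})$ time, a boundary node $w\in T'_i$ and a set $E$ of at most $n^{0.5}$ extra elements with $\leaves(u)=\leaves(w)\cup E$. For $w$ we have already stored $\finger(w)=v$ together with the partition of $v$'s children into full, mixed, and empty (with respect to $w$), each list knowing its size and its total subtree-leaf count. The key observation is that adding the $\le n^{0.5}$ elements of $E$ changes the status of only a bounded number of children of $v$, and may also move the finger: concretely, compatibility of $\leaves(u)$ with $\mathcal{C}$ can be decided by examining (i) whether any child of $v$ is \emph{mixed with respect to $u$}, and (ii) whether the new elements force the relevant node to climb above $v$. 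So I would first handle the elements of $E$ one at a time much as in the small case — for each $e\in E$, if the leaf $u_e$ of $T_c$ already lies in the subtree of a full child of $v$, nothing changes; otherwise it lies in an empty or mixed child $v'$, and we accumulate into a per-child counter (initialized, for children already full, to their full subtree leaf count). After processing $E$, we walk over the $\Oh(n^{0.5})$ children of $v$ that were touched plus all previously mixed children, and check that each is now either full (counter $=$ subtree leaf count) or empty (counter $=0$); if some child is still mixed, $\leaves(u)$ is incompatible. We must also verify that the true lca of $\leaves(u)$ is $v$ and not a strict ancestor: this happens exactly when, after processing $E$, at most one child of $v$ is non-empty — in that degenerate situation we would need to recompute against the parent of $v$, but since $|\leaves(u)|>n^{0.5}\ge$ (size of any single micro tree on the path) this can be argued away or handled by at most one extra ascent, still within budget. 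Finally, because we also must account for children of $v$ that are full or empty with respect to $w$ but become mixed with respect to $u$ purely through an element of $E$ landing inside them, all such children are among those we touched while processing $E$, so the $\Oh(n^{0.5})$ bound on the number of examined children holds.

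Putting the pieces together: each of the $\Oh(n^{0.5})$ extended-lca queries, leaf-count lookups, and child-identifications costs $\Oh(\log n)$ by Lemma~\ref{lem:lca}, the micro-macro retrieval costs $\Oh(n^{0.5})$ by Proposition~1, and the stored full/mixed/empty lists let us enumerate the relevant children of $\finger(w)$ in time proportional to their number, which is $\Oh(n^{0.5})$ since only that many can be mixed-with-respect-to-$u$ or touched by $E$. Hence the whole check runs in $\Oh(n^{0.5}\log n)$ time, as claimed.

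I expect the main obstacle to be the bookkeeping around the finger when $E$ causes the lca of $\leaves(u)$ to sit strictly above $\finger(w)$: in that case $v=\finger(w)$ is no longer the candidate node $v$ for the compatibility criterion, and one has to either ascend (bounded number of steps, charged to the $n^{0.5}$ budget) or argue that when $|\leaves(u)|>n^{0.5}$ and $\leaves(u)\supseteq\leaves(w)$ with $w$ a boundary node this simply cannot force the lca to climb too far. Making this precise — and making sure the count of "children of $\finger(w)$ that we must inspect" stays $\Oh(n^{0.5})$ even though $\finger(w)$ could a priori have many mixed children — is the delicate part; everything else is a routine combination of Lemma~\ref{lem:lca} and Proposition~1.
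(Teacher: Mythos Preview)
Your small case and the overall architecture are fine, and you correctly identify the two delicate points. However, both of them are genuine gaps, not just bookkeeping.

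\textbf{Mixed children.} You propose to ``walk over the $\Oh(n^{0.5})$ children of $v$ that were touched plus all previously mixed children''. But the number of mixed children of $\finger(w)$ is only bounded by $|\leaves(w)|$, which can be $\Theta(n)$ (each mixed child contains at least one leaf of $\leaves(w)$, but $w$ is a boundary node and its leaf set can be almost the whole tree). So you cannot afford to enumerate them. The paper never enumerates mixed children. Instead, in the case $v=\finger(w)$ it uses a single global counter $g$, initialized in $\Oh(1)$ time from the \emph{aggregate} leaf counts stored with the full and mixed lists: $g$ is set to the total number of leaves under full or mixed children of $v$, minus $|\leaves(w)|$. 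This is exactly the number of leaves that $S$ must supply inside mixed children to turn them all full. Then for each $\ell\in S$ one locates the child $v_i$ via an $\lca$ query, decrements $g$, and if $v_i$ is empty and seen for the first time, adds $|\leaves(v_i)|$ to $g$. Compatibility holds iff $g=0$ at the end. This is the idea you are missing; with only the per-child counters you describe, you cannot certify that \emph{all} mixed children became full without touching each of them.

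\textbf{The lca climbing above $\finger(w)$.} You suggest this ``can be argued away or handled by at most one extra ascent''. Neither is true: adding a single element of $E$ outside the subtree of $\finger(w)$ can send the lca arbitrarily many levels up. The paper handles this by \emph{first} computing $v$ as the lca of $\finger(w)$ together with all leaves labelled by $S$ (that is $|S|$ extended-lca queries, $\Oh(n^{0.5}\log n)$ time), and only then branching on whether $v$ is a proper ancestor of $\finger(w)$ or equals it. If it is a proper ancestor, all of $\leaves(w)$ sits under a single child of $v$, so one sets that child's counter to $|\leaves(w)|$ and processes the remaining $|S|$ elements exactly as in your small case; only $|S|+1=\Oh(n^{0.5})$ children are ever touched. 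This replaces your vague ``ascend a bounded number of steps'' with a direct computation.
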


\begin{proof}
By Lemma 2.2 of~\cite{JanssonShenSung_consensus_tree_JACM2016}, to check if $\leaves(u)$ is compatible with $\mathcal{C}$ we need
to check if there exists a node $v$ such that for every child $v'$ of $v$ either
$\leaves(v')\cap\leaves(u)=\emptyset$
or $\leaves(v')\subseteq \leaves(u)$. First, observe that $v$ can be chosen as the lowest
common ancestor of all leaves with labels belonging to $\leaves(u)$.
By properties of the micro-macro decomposition, we can retrieve a boundary node
$w$ and a set $S$ of up to $n^{0.5}$ labels such that $\leaves(u)=\leaves(w)\cup S$
(if $|\leaves(u)|<n^{0.5}$, there is no $w$).
Then, the lowest common ancestor of all leaves with labels belonging to $\leaves(u)$
is the lowest common ancestor of $\finger(w)$ and all leaves with labels belonging
to $S$. Therefore, $v$ can be found with $|S|$ lca queries in $\Oh(n^{0.5}\log n)$ time.
Second, to check if $\leaves(v_{i})\cap\leaves(u)=\emptyset$ or $\leaves(v_{i})\subseteq \leaves(u)$
for every child $v_i$ of $v$ we distinguish two cases:

If $v$ is a proper ancestor of $\finger(w)$ we can calculate $|\leaves(v_i)\cap\leaves(u)|$
for every $v_i$ in $\Oh(|S|\log n)=\Oh(n^{0.5}\log n)$ time as follows. Every edge has its
associated counter. We assume that all counters are set to zero before starting the procedure
and will make sure that they are cleared at the end. First, we use an $\lca(w,v)$ query to access the
edge leading to the subtree containing $w$ and set its counter to $|\leaves(w)|$. Then,
we iterate over all $\ell\in S$, retrieve the leaf $u$ of $T_{c}$ labelled with $\ell$,
and use an $\lca(u,v)$ query to access the edge leading to the subtree of $v$ containing $u$
and increase its counter by one. Additionally, whenever we access
an edge for the first time (in this particular query), we add it to a temporary list $Q$. After
having processed all $\ell\in S$, we iterate over $(v,v_i)\in Q$ and check if the counter of $(v,v_i)$
is equal to the number of leaves in the subtree rooted at $v_i$ (which requires retrieving the
number of leaves). If this condition holds for every $(v,v_i)\in Q$ then $\leaves(u)$ is compatible
with $\mathcal{C}$ and furthermore, the nodes $v_{i}$ such that $(v,v_i)\in Q$ are exactly
the ones that should be reconnected. Finally, we iterate over the edges in $Q$ again and reset
their counters.

If $v=\finger(w)$ the situation is a bit more complicated because we might not have enough
time to explicitly iterate over all children of $v$ that should be reconnected. Nevertheless,
we can use a very similar method. Every edge has its associated counter (again, we assume
that the counter are set to zero before starting the procedure and will make sure that they
are cleared at the end). We also need a global counter $g$, that is set to the total number
of leaves in all subtrees rooted at either full or mixed children of $v$ decreased by $|\leaves(w)|$.
$g$ can be initialized in constant time in the first step of the procedure due to the additional information
stored with every list of children.
Intuitively, $g$ is how many leaves not belonging to $\leaves(w)$ we still have to see to conclude that indeed 
$\leaves(v_{i})\cap\leaves(u)=\emptyset$ or $\leaves(v_{i})\subseteq \leaves(u)$ for every child $v_i$ of $v$.
We iterate over $\ell\in S$ and access the edge $(v,v_i)$ leading to the subtree
containing $u$ labelled with $\ell$. We decrease $g$ by one and, if $v_i$ is an empty child of $v$
and this is the first time we have seen $v_i$ (in this query) then we add the number of leaves
in the subtree rooted at $v_i$ to $g$. If, after having processed all $\ell\in S$,
$g=0$ then we conclude that $\leaves(u)$ is compatible with $\mathcal{C}$.
The whole process takes $\Oh(|S|\log n)=\Oh(n^{0.5}\log n)$ time.
\end{proof}

Before explaining the details of how to update the additional information, we present the intuition.
Recall that adding $\leaves(u)$ to $\mathcal{C}$ is done by creating a new child $v'$ of $v$
and reconnecting some children of $v$ to $v'$. Let the set of all children of $v$ be $C$
and the set of children that should be reconnected be $C_r$. Note that if $|C_{r}|=1$ or
$|C|=|C_{r}|$ then we do not have to change anything in $T_{c}$. Otherwise,
updating $T_c$ can be implemented using two different methods:
\begin{enumerate}
\item Delete edges from nodes in $C_r$ to $v$. Create a new tree consisting
of a single node $v'$ and make it a child of $v$. Then, make all nodes in $C_r$ children of $v'$.
\item Delete edges from nodes in $C\setminus C_r$ to $v$. Delete the edge from $v$ to its parent $w$.
Create a new tree consisting of a single node $v'$ and make it a child of $w$. Then, make $v$
a child of $v'$ and also make all nodes in $C\setminus C_r$ children of $w$. See Figure~\ref{fig:methods}.
\end{enumerate}
Thus, by using $C_r$ or $C\setminus C_r$, the number of operations can be
either $\Oh(|C_r|)$ or $\Oh(|C|-|C_r|)$. We claim that by choosing the cheaper option
we can guarantee that the total time for modifying the link-cut tree representation of $T_{c}$ is $\Oh(n\log^{2}n)$.
Intuitively, every edge of the final consensus tree participates in $\Oh(\log n)$ operations,
and there are at most $n$ such edges. This is formalized in the following lemma. 

\begin{lemma}
\label{lem:summed}
$\min \{|C_{r}|,|C|-|C_{r}| \}$ summed over all updates of $T_{c}$ is $n\log n$.
\end{lemma}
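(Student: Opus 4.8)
The plan is an amortized (potential-function) argument: I will exhibit a potential $\Phi$ on the current consensus tree for which every update satisfies $(\text{cost})+\Delta\Phi\le\Oh(\log n)$; since there are $\Oh(n)$ updates and $\Phi$ starts at $\Oh(n\log n)$ and never goes negative, summing then yields the claim.

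First I would make one update explicit. When we add $\leaves(u)$ and update at a node $v$ with child set $C$, reconnecting $C_r\subseteq C$, we act only when $2\le|C_r|\le|C|-1$ and we use whichever of the two described methods is cheaper; write $m=|C|$ and $s=\min\{|C_r|,|C|-|C_r|\}\le m/2$, so the cost of the update is $s$. The key local fact is that the update changes the number of children of exactly two nodes: $v$, whose degree drops from $m$ to $m-s+1$ (method 1, used when $|C_r|\le m/2$) or to $m-s$ (method 2); and the freshly created node $v'$, whose degree becomes $s$ (method 1) or $s+1$ (method 2, where $v'$ also gets $v$ as a child). In method 2 the parent of $v$ merely has one child replaced by another, so its degree is unchanged (and if $v$ is the root, $v'$ just becomes the new root), and no other node is affected. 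Moreover there are at most $q\le n$ updates in total, since each creates one new internal node and a phylogenetic tree on $n$ leaves has fewer than $n$ internal nodes.

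Now set $\Phi=\sum_{z\in T_c}f(d(z))$, where $d(z)$ is the number of children of $z$ and $f(x)=x\log_2 x$ with $f(0)=0$. Then $\Phi\ge0$ (internal nodes have $d(z)\ge2$) and initially $\Phi=n\log_2 n$ (the initial $T_c$ is a root with $n$ leaf children). By the previous paragraph, $\Delta\Phi$ for an update equals $f(m-s+1)-f(m)+f(s)$ in method 1 and $f(m-s)-f(m)+f(s+1)$ in method 2. Since $f'(x)=\log_2 x+\log_2 e>\log_2 m$ for all $x\in[m/2,m]$, and $m-s+1,\,m-s\ge m/2$, convexity gives $f(m)-f(m-s+1)\ge(s-1)\log_2 m$ and $f(m)-f(m-s)\ge s\log_2 m$; on the other hand $s\le m/2$ gives $f(s)\le s(\log_2 m-1)$ and, except when $s=m/2$, also $f(s+1)\le(s+1)(\log_2 m-1)$ (the case $s=m/2$ costing only an extra $\Oh(1)$ via $\log_2(\tfrac m2+1)\le\log_2\tfrac m2+\Oh(\tfrac1m)$). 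Substituting these into $s+\Delta\Phi$ makes the $\log_2 m$ terms cancel, leaving $s+\Delta\Phi\le\log_2 m+\Oh(1)\le\log_2 n+\Oh(1)$ in both cases.

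Summing over all updates, $\sum\bigl(s+\Delta\Phi\bigr)\le q\bigl(\log_2 n+\Oh(1)\bigr)=\Oh(n\log n)$, whereas $\sum\Delta\Phi=\Phi_{\mathrm{final}}-\Phi_{\mathrm{initial}}\ge-n\log_2 n$; subtracting gives $\sum_{\text{updates}}\min\{|C_r|,|C|-|C_r|\}=\sum s=\Oh(n\log n)$, as claimed. I expect essentially all of the real work to lie in the two estimates above: confirming that exactly $v$ and $v'$ change their child counts (in particular tracking the ``$+1$'' shifts — $v'$ becoming a child of $v$ in method 1, and $v$ becoming a child of $v'$ in method 2), and checking the convexity bound in the small-$m$ and $s=m/2$ edge cases; the scheme itself is a standard entropy-type potential, and it is the $\log_2 e>1$ slack in $f'$ that makes the cancellation work. (The paper's remark that ``every edge of the final consensus tree participates in $\Oh(\log n)$ operations'' suggests an alternative, union-by-size style accounting of the same sum obtained by replaying the updates in reverse; I would expect that to work too, with the deleted node $v'$ sitting inside one of the two merged child sets merely forcing the per-step growth factor to be a constant such as $\tfrac43$ instead of $2$.)
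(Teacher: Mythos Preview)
Your proposal is correct and takes essentially the same approach as the paper: both are amortized arguments with an entropy-type potential based on node degrees. The paper phrases it as credits, giving each node $\log b$ credits where $b$ is its number of siblings (equivalently, potential $\sum_v d(v)\log(d(v)-1)$), while you use $\Phi=\sum_v d(v)\log d(v)$; the resulting per-update inequalities differ only in the $\pm 1$ bookkeeping, and both yield the same $\Oh(n\log n)$ bound.
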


\begin{proof}
We assume that $2 \leq |C_{r}|<|C|$ in every update, as otherwise there is nothing to change in $T_{c}$.
Then, there are at most $n$ updates, as each of them creates a new inner node and there
are never any nodes with degree 1 in $T_{c}$.

We bound the sum of $\min \{|C_{r}|,|C|-|C_{r}| \}$ by assigning credits to inner nodes of $T_{c}$.
During the execution of the algorithm, a node $u$ with $b$ siblings should have $\log b$
credits. Thus, whenever we create a new inner node we need at most $\log n$ new credits, thus the
total number of allocated credits is $n\log n$. It remains to argue that, whenever we create a new
child $v'$ of $v$ and reconnect some of its children, the original credits of $v$ can be used to
pay for the update and make sure that all children of $v$ and $v'$ have enough credits after the update.

Denoting $x=|C_{r}|$ and $y=|C|-|C_{r}|$, the cost of the update is $\min \{ x,y\}$. The total
number of credits of all children of $v$ before the update is $(x+y)\log(x+y-1)$.
After the update, the number of credits of all children of $v$ is $(y+1)\log y\leq y\log y+\log n$
and the number of credits of all children of $v'$ is $x\log (x+1)$. Ignoring the $\log n$ new credits
allocated to $v'$, the number of available credits is thus:
\[ (x+y)\log(x+y-1) - y\log y - x\log (x-1) = x\log(1+y/(x-1)) + y\log(1+(x-1)/y)\]
which is at least $\min\{x,y\}$ for $x\geq 2$, so enough to pay $\min \{ |C_{r}|,|C|-|C_{r}|\}$
for the update. Hence, the sum is at most $n\log n$.
\end{proof}

\begin{figure}[h]
\includegraphics[width=\textwidth]{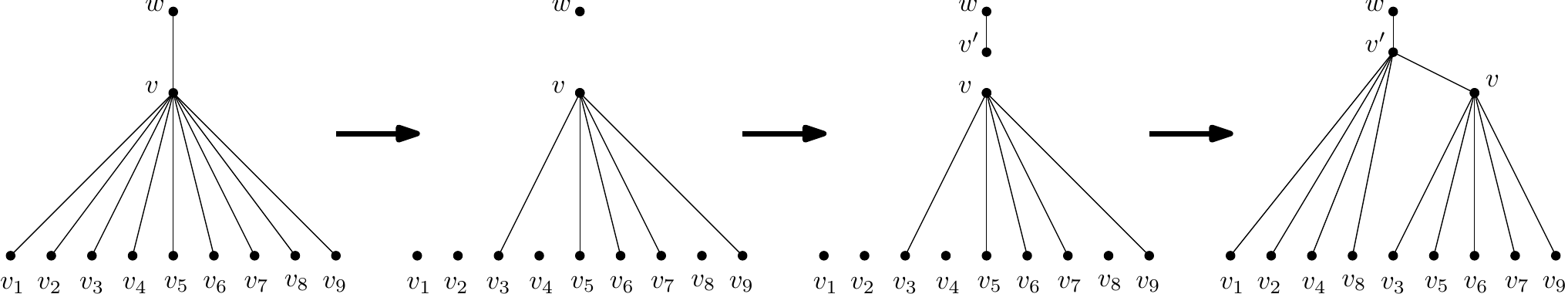}
\caption{Reconnecting children $v_3,v_5,v_6,v_7,v_9$ of $v$ using the second method.\label{fig:methods}
}
\end{figure}

Before presenting the whole update procedure, we need one more technical lemma.

\begin{lemma}
\label{lem:update2}
The procedure for checking if $\leaves(u)$ is compatible with $\mathcal{C}$ can be requested
to return $C_{r}$ in $\Oh(|C_{r}|+n^{0.5})$ time or $C\setminus C_{r}$ in $\Oh(|C|-|C_{r}|+n^{0.5})$ time.
\end{lemma}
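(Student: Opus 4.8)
The plan is to augment the compatibility‑checking procedure of Lemma~\ref{lem:query} so that it also records enough information to reconstruct either $C_r$ or $C\setminus C_r$ on demand, without paying more than $\Oh(n^{0.5})$ overhead beyond the size of the list actually returned. Recall that in the proof of Lemma~\ref{lem:query} the procedure already maintains, via the counters on edges $(v,v_i)$, exactly the information ``how many leaves of $\leaves(u)$ fall into the subtree of $v_i$'' for every child $v_i$ of $v$ that is touched by some $\ell\in S$ (or by the $\lca(w,v)$ query). The key observation is that a child $v_i$ belongs to $C_r$ (i.e.\ it should be reconnected) precisely when $\leaves(v_i)\subseteq\leaves(u)$, and by the end of the procedure every child that is \emph{not} wholly contained in $\leaves(w)$ and is nevertheless full has been touched. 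So the two cases of Lemma~\ref{lem:query} must be handled slightly differently.

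In the first case, where $v$ is a \emph{proper} ancestor of $\finger(w)$, every child $v_i$ of $v$ with $\leaves(v_i)\cap\leaves(u)\neq\emptyset$ is touched either by the $\lca(w,v)$ query (the single child on the path to $w$) or by some $\ell\in S$, hence appears in the temporary list $Q$; and $v_i\in C_r$ iff the counter of $(v,v_i)$ equals $|\leaves(v_i)|$. Therefore $C_r$ is obtained by one pass over $Q$, keeping the edges whose counter is full, and $C\setminus C_r$ is everything else; but since $|C\setminus C_r|$ may be much larger than $|Q|$ we cannot afford to enumerate it directly. However, we do not need to: in this case $|C_r|\ge 1$ always, and if the caller asks for $C\setminus C_r$ we first compute $|C|$ (stored with $v$) and $|C_r|=|Q|\cap\{\text{full}\}$, and if $|C|-|C_r|<|C_r|$ we instead return $C\setminus C_r$ by iterating over \emph{all} children of $v$ — but this is only requested when $|C|-|C_r|$ is the smaller quantity, so the cost is $\Oh(|C|-|C_r|)$, which together with the $\Oh(n^{0.5})$ already spent gives the claimed bound. (In fact a cleaner route: the caller always asks only for the cheaper of the two lists by Lemma~\ref{lem:summed}, and for the cheaper list we can afford a linear scan over $C$.) The $n^{0.5}$ additive term absorbs the cost of running the query itself and of clearing the counters.

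In the second case, where $v=\finger(w)$, the full/mixed/empty partition of the children of $v$ is already stored explicitly as three lists (item~3 of the additional information). Here $C_r$ is exactly the set of full children of $v$, so returning $C_r$ costs $\Oh(|C_r|)$ by simply reading off the ``full'' list; and $C\setminus C_r$ is the concatenation of the ``mixed'' and ``empty'' lists, returned in $\Oh(|C|-|C_r|)$ time. The only subtlety is that the stored full/mixed/empty classification is with respect to $\leaves(w)$, not $\leaves(u)=\leaves(w)\cup S$: a child that was \emph{mixed} or \emph{empty} with respect to $\leaves(w)$ might become \emph{full} with respect to $\leaves(u)$ once the elements of $S$ are added. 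So before reading off the lists we must reclassify the $\Oh(|S|)=\Oh(n^{0.5})$ children touched by elements of $S$: for each such child $v_i$ touched, its counter plus (if $v_i$ was the child on the path to $w$) $|\leaves(w)|$ gives $|\leaves(v_i)\cap\leaves(u)|$, and $v_i$ moves into ``full'' iff this equals $|\leaves(v_i)|$. There are at most $|S|+1=\Oh(n^{0.5})$ such moves, each done in $\Oh(1)$ on doubly linked lists, accounting for the $+n^{0.5}$ term; afterwards the lists reflect the classification with respect to $\leaves(u)$ and the two output modes cost $\Oh(|C_r|)$ and $\Oh(|C|-|C_r|)$ respectively. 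These moves are then rolled back (or, better, kept, since we are about to perform the update anyway).

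The main obstacle is precisely this bookkeeping in the second case — ensuring that the full/mixed/empty lists can be transiently updated to account for $S$, the correct list(s) extracted, and everything left in a consistent state (either cleanly reverted for a failed compatibility check, or handed off to the update routine for a successful one) — all within the additive $\Oh(n^{0.5})$ budget. Everything else is a direct instrumentation of the procedure already written for Lemma~\ref{lem:query}, reusing the same counters, the same temporary list $Q$, and the same $\lca$ machinery from Lemma~\ref{lem:lca}.
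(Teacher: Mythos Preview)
Your Case~2 analysis is close to the paper's, though you take a slightly more invasive route (temporarily mutating the full/mixed/empty lists rather than, as the paper does, reading $C_r$ off directly as $\{\text{full children}\}\cup\{\text{children touched by some }\ell\in S\}$ and $C\setminus C_r$ as $\{\text{empty children}\}\setminus\{\text{touched}\}$). One minor slip: your reclassification formula ``counter plus $|\leaves(w)|$ if $v_i$ is the child on the path to $w$'' does not make sense when $v=\finger(w)$, since then $\leaves(w)$ is spread over \emph{several} children (the full and the mixed ones), and for a mixed child $v_i$ you do not know $|\leaves(v_i)\cap\leaves(w)|$. The easy fix is to note that, once compatibility has been established, every mixed child necessarily becomes full (otherwise it would still be mixed with respect to $\leaves(u)$, contradicting compatibility), so you may move the entire mixed list to full without any counter arithmetic.

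The real gap is in Case~1, when returning $C\setminus C_r$. You propose iterating over all of $C$ and then argue ``this is only requested when $|C|-|C_r|$ is the smaller quantity, so the cost is $\Oh(|C|-|C_r|)$''. But a scan of $C$ costs $\Oh(|C|)$, and when $|C|-|C_r|<|C_r|$ one has $|C|=\Theta(|C_r|)$, not $\Theta(|C|-|C_r|)$; your ``cleaner route'' via Lemma~\ref{lem:summed} has the same arithmetic problem and is in any case circular, since the present lemma must hold regardless of which list is requested. The observation you are missing---and which the paper uses---is that in Case~1 every element of $C_r$ corresponds to an edge in $Q$, and $|Q|\le |S|+1=\Oh(n^{0.5})$. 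Hence $|C|=|C\setminus C_r|+|C_r|\le |C\setminus C_r|+|Q|=\Oh(|C|-|C_r|+n^{0.5})$, and a plain scan over all children of $v$ already meets the claimed bound without any assumption on the caller.
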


\begin{proof}
By inspecting the proof of Lemma~\ref{lem:query}, we see that there are two cases depending on
whether $v$ is a proper ancestor of $\finger(w)$ or not.
\begin{enumerate}
\item If $v$ is a proper ancestor of $\finger(w)$ then $C_r$ can be obtained from $Q$.
More precisely, for every $(v,v_i)\in Q$ we add $v_i$ to $C_{r}$ in $\Oh(|C_{r}|)$ total
time. We can also obtain $C\setminus C_r$ in $\Oh(|C|)=\Oh(|C\setminus C_r|+|S|)=\Oh(|C|-|C_{r}|+n^{0.5})$ time.
\item If $v=\finger(w)$ then, while iterating over $\ell\in S$, if this is the first time we
have seen $v_i$ then we add $v_i$ to $C_{r}$. Additionally, we add all full children of $v$
to $C_{r}$. Thus, $C_{r}$ can be generated in $\Oh(|C_{r}|)$ time.
Similarly, $C\setminus C_{r}$ consists of all empty children of $v$ without the
nodes $v_{i}$ seen when iterating over $\ell\in S$, and so can be generated
in $\Oh(|C\setminus C_r|+|S|)=\Oh(|C|-|C_{r}|+n^{0.5})$ time.
\end{enumerate}
Thus, we can always generate $C_{r}$ in $\Oh(|C_{r}|+n^{0.5})$ time and
$C\setminus C_{r}$ in $\Oh(|C|-|C_{r}|+n^{0.5})$ time. 
\end{proof}

To add $\leaves(u)$ to $\mathcal{C}$, we will need
to iterate over either $C_{r}$ or $C\setminus C_{r}$ (depending on which is smaller).
After paying additional $\Oh(n^{0.5})$ time we can assume that we have access to a list
of the elements in the appropriate set. The additional time sums up to $\Oh(n^{1.5})$,
because there can be only $n$ distinct new sets added to $\mathcal{C}$.

\begin{lemma}
\label{lem:update}
If $\leaves(u)$ is compatible with $\mathcal{C}$ then, after adding $\leaves(u)$ to $\mathcal{C}$
and modifying $T_c$ we can update all additional information in amortized $\Oh(k n^{0.5}\log n)$
time assuming that we add $n$ such sets.
\end{lemma}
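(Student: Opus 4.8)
The plan is to carry along, as part of the ``additional information'', a few inverted indices on top of what is listed: for every node $x\in T_c$ the list $F(x)=\{w\text{ boundary}:\finger(w)=x\}$ (so $\sum_x|F(x)|=\Oh(k n^{0.5})$ at all times); for every node $v_i\in T_c$ the list $G(v_i)$ of boundary nodes $w$ for which $v_i$ is a \emph{non-empty} (full or mixed) child of $\finger(w)$, together with the class of $v_i$ with respect to each such $w$; and a direct-address counter array indexed by boundary nodes (size $\Oh(k n^{0.5})$, cleared lazily). Crucially, I would store the \emph{empty} list of each $w$ only in aggregate form (its size and total number of leaves), never enumerating it --- this is consistent with Lemma~\ref{lem:query}, which only ever touches empty children that are reached by one of the at most $n^{0.5}$ extended-lca queries (and then only needs a membership test against the non-empty set and the subtree size). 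Note that, since $v_i\in G$-context of $w$ forces $\finger(w)$ to be the parent of $v_i$, we have $G(v_i)\subseteq F(\text{parent}(v_i))$, and $\sum_{v_i}|G(v_i)|=\sum_w d_w$ where $d_w$ denotes the number of non-empty children of $\finger(w)$ with respect to $w$.

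The first step is a structural lemma describing what an update does. The update is performed at the node $v$ that is the lca of the inserted set's leaves; it creates a new child $v'$ of $v$ and reconnects $C_r$ (the children of $v$ whose leaf set is contained in $\leaves(u)$) under $v'$, and one has $\leaves(v')=\leaves(u)$. I claim: (i) $\finger(w)$ can change only for $w\in F(v)$, and then it moves to $v'$, precisely when every non-empty child of $v$ with respect to $w$ lies in $C_r$ (equivalently $\leaves(w)\subseteq\leaves(v')$); (ii) for such a $w$ the full and the mixed lists of $\finger(w)$ are unchanged as lists of nodes (they simply migrate from $v$ to $v'$), and the new empty aggregate of $w$ is $|\leaves(v')|-L^{\mathrm{full}}_w-L^{\mathrm{mixed}}_w$, computed in $\Oh(1)$, and all non-empty records of $w$ stay valid; (iii) for $w\in F(v)$ whose finger stays at $v$, the only change is that the $C_r$-children leave the three lists while the single node $v'$ enters exactly one of them --- its class being empty if no $v_i\in C_r$ is non-empty with respect to $w$, and mixed otherwise (it cannot be full, since the finger did not move); (iv) every other node of $T_c$ keeps its set of children and its whole subtree, so no other record changes. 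This is a direct case analysis from the definition of $\finger$ together with the fact that inserting $v'$ below $v$ only coalesces some children of $v$; note also that the two update methods of Section~\ref{sec:greedy} produce the same $T_c$, so the method chosen is immaterial here.

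The procedure per update is then: first fetch the smaller side $M\in\{C_r,\ C\setminus C_r\}$ via Lemma~\ref{lem:update2} in $\Oh(\min\{|C_r|,|C\setminus C_r|\}+n^{0.5})$ time; next walk $\bigcup_{v_i\in M}G(v_i)$ and, for every $w$ met, increment its counter, thereby obtaining $c_w=|M\cap\{\text{non-empty children of }v\text{ w.r.t. }w\}|$ for every relevant $w\in F(v)$ (if $M=C\setminus C_r$, the finger of $w$ moves iff $c_w=0$ and $v'$ is empty w.r.t. $w$ iff $c_w=d_w$; symmetric otherwise); finally iterate $F(v)$ and apply the appropriate case of the structural lemma to each $w$, updating $F$, the aggregates, the non-empty sets (deleting the destroyed pairs via back-pointers when $v_i$ is on the walked side, and lazily otherwise, and inserting at most one new pair $(v',w)$), and the $G$-lists. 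This costs $\Oh(1)$ or $\Oh(\log n)$ per $w\in F(v)$, plus $\Oh(1)$ per walked $G$-entry, plus the $\Oh(n^{0.5})$ overhead of Lemma~\ref{lem:update2}. For the amortization I use the potential $\Phi=\sum_w(\text{number of children of }\finger(w))$: initially $\Phi\le k n^{1.5}$ (each of the $\Oh(k n^{0.5})$ fingers is the root of the initial star with $n$ children), $\Phi$ never increases, and every update decreases it by at least $1$ for each $w\in F(v)$ (either coalescing $|C_r|\ge 2$ children of $v$ into one, or relocating $\finger(w)$ down while shedding $|C\setminus C_r|\ge 1$ children). Hence $\sum_{\text{updates}}|F(v)|\le\Phi_0=\Oh(k n^{1.5})$; the number of non-empty records ever created is $\Oh(\Phi_0)+\Oh(\sum|F(v)|)=\Oh(k n^{1.5})$, so the lazy deletions are affordable; the $G$-walk costs $\sum_{\text{updates}}\sum_{v_i\in M}|G(v_i)|\le\sum_{\text{updates}}|M|\cdot\Oh(k n^{0.5})=\Oh(k n^{1.5}\log n)$ by Lemma~\ref{lem:summed}; the Lemma~\ref{lem:update2} overheads total $\Oh(n^{1.5})$; and modifying $T_c$ itself costs $\Oh(n\log^2 n)$ by Lemmas~\ref{lem:summed} and~\ref{lem:lca}. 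Summing and dividing by the $n$ added sets gives amortized $\Oh(k n^{0.5}\log n)$.

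The main obstacle is precisely this accounting: making every per-update cost scale with $\min\{|C_r|,|C\setminus C_r|\}$ and with $|F(v)|$, never with $|C_r|$ or $|C|$. The three points that must be checked with care are that the empty lists are genuinely only needed in aggregate (so they are never enumerated, either during a query or during an update), that a finger move leaves all of the boundary node's non-empty-child records intact (so finger moves are $\Oh(\log n)$ each), and that the potential $\Phi$ simultaneously dominates $\sum_{\text{updates}}|F(v)|$, the total number of non-empty records created, and (via Lemma~\ref{lem:summed}) the $G$-walk work, so that lazily clearing stale records can always be charged backwards to their creation.
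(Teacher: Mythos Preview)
Your approach is genuinely different from the paper's and considerably more elaborate. The paper does not introduce inverted indices $F$ and $G$ or a potential function; instead it simply scans \emph{all} $\Oh(k n^{0.5})$ boundary nodes at every one of the at most $n$ updates (already only $\Oh(k n^{1.5})$ in total), and for each $w$ with $\finger(w)=v$ it iterates once over whichever of $C_r,\ C\setminus C_r$ is smaller to decide where $\finger(w)$ goes and which group the new node belongs to; the list maintenance is then charged per boundary node to the fact that at most $\Oh(n)$ list entries are ever created. Your potential $\Phi$ is a pleasant alternative accounting for the same phenomenon, but it does not improve the bound.

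There are two concrete problems. First, claim~(iii) that ``$v'$ cannot be full, since the finger did not move'' is false. The finger stays at $v$ exactly when some child in $C\setminus C_r$ is non-empty with respect to $w$; this says nothing about whether every child in $C_r$ is full. For instance, let $v$ have leaf children $v_1,\ldots,v_5$, take $\leaves(w)=\{1,2,3,4\}$, and insert $\leaves(u)=\{1,2\}$: then $C_r=\{v_1,v_2\}$ are both full so $v'$ is full, yet $v_3,v_4\in C\setminus C_r$ are also full and the finger stays at $v$. The fix (count full and mixed separately during the $G$-walk) is easy, but the classification as written is wrong.

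Second, and more seriously, you invoke Lemma~\ref{lem:update2} to obtain the smaller side $M$ while simultaneously declaring that the empty list of each $w$ is kept only in aggregate. These are inconsistent. In case~2 of Lemma~\ref{lem:update2} (when $v=\finger(w_0)$ for the boundary node $w_0$ attached to the inserted cluster), producing $C\setminus C_r$ in $\Oh(|C\setminus C_r|+n^{0.5})$ time is done by \emph{enumerating} the empty children of $v$ with respect to $w_0$ and discarding those hit by $S$; without that list you are forced to iterate over all children of $v$, which costs $\Oh(|C|)$, and in case~2 the $|C_r|$ part of $|C|$ is not bounded by $n^{0.5}$. A natural repair is to keep the empty list explicitly after all and delete from it lazily (your own potential shows that only $\Oh(k n^{1.5})$ list entries are ever created across all three groups, so the lazy scans are affordable), but this is absent from the proposal.
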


\begin{proof}
Recall that $T_{c}$ is maintained using the data structure from Lemma~\ref{lem:lca},
and adding $\leaves(u)$ to $\mathcal{C}$ is implemented by creating a new child $v'$ of $v$
and reconnecting some of the children of $v$ to $v'$. $C$ is the set of all children of $v$ and
$C_r$ is the set of children of $v$ that are reconnected to $v'$. If $|C_r|\leq |C|-|C_r|$ we
iterate over $C_r$ and reconnect them one-by-one. If $|C_r| > |C|-|C_r|$ we iterate over
$C\setminus C_r$ and reconnect them to a new node $w$ that is inserted between $v$ and its
parent. To iterate over either $C_{r}$ or $C\setminus C_{r}$, we extend the query procedure
as explained in Lemma~\ref{lem:update2}. This adds $\Oh(n^{0.5}$ to the time complexity,
but then we can assume that the requested set can be generated in time proportional to
its size. To unify the case of $|C_{r}| \leq |C|-|C_{r}|$ and $|C_{r}| > |C|-|C_{r}|$, we think
that $v$ is replaced with two nodes $v'$ and $v''$, where $v'$ is the parent of $v''$. All nodes
in $C_r$ become children of $v''$ while all nodes of $C\setminus C_r$ become children of $v'$
after iterating over either $C_r$ or $C\setminus C_r$, depending on which set is smaller,
so by Lemma~\ref{lem:summed} in the whole process we iterate over sets of 
total size $n\log n$, so only amortized $\log n$ assuming that we add $n$ sets $\leaves(u)$.

Consider a boundary node $u$. If $\finger(u)\neq v$ then there is no need to update the additional
information concerning $u$. If $\finger(u)=v$ then we need to decide if the finger of $u$ should
be set to $v'$ or $v''$ and update the partition of the children of $\finger(u)$ accordingly.
$\finger(u)$ should be set to $v'$ exactly when, for any $w\in C\setminus C_r$,
$\leaves(w)\cap \leaves(u) = \emptyset$ or, in other words, all nodes in $C\setminus C_r$
are empty with respect to $u$. The groups should be updated as follows:
\begin{enumerate}
\item If $\finger(u)$ is set to $v''$ then we should remove all nodes in $C\setminus C_r$
from the list of empty nodes with respect to $u$ (as they are no longer children of $\finger(u)$).
Other groups remain unchanged.
\item If $\finger(u)$ is set to $v'$ then we should remove all nodes in $C_r$ from the lists.
Additionally, we need to insert $v''$ into the appropriate group: full if all nodes in $C_r$ were
full, empty if all nodes in $C_r$ were empty, and mixed otherwise.
\end{enumerate}
We need to show that all these conditions can be checked by either iterating over the nodes
of $C$ or over the nodes of $C\setminus C_r$, because we want to iterate over the smaller of these.
This then guarantees that the amortized cost of updating the additional information for
a boundary node is only $\Oh(\log n)$, so amortized $\Oh(k n^{0.5}\log n)$ overall.

To check if all nodes in $C\setminus C_r$ are empty with respect to $u$, we can either
iterate over the nodes in $C\setminus C_r$ or iterate over all nodes in $C_r$ and check if all nodes
in $C$ that are full or empty in fact belong to $C_r$ (this is possible because we also keep the total number
of full and empty nodes in $C$). Thus, we can check if $\finger(u)$ should be set to $v'$.

If $\finger(u)$ is set to $v'$ we need to decide where to put $v''$. We only explain how to
decide if all nodes in $C_r$ are full, as the procedure for empty is symmetric. We can either
iterate over all nodes in $C_r$ and check that they are full or iterate over all nodes in $C\setminus C_r$
and check that all nodes in $C$ that are empty or mixed in fact belong to $C\setminus C_r$
(and thus do not belong to $C_r$, so all nodes in $C_r$ are full). Finally, we add the number
of leaves in the subtree rooted at $v''$ (extracted in $\Oh(\log n)$ time) to the
appropriate sum.

It remains to describe how to remove all unnecessary nodes from the lists. Here we do not
worry about having to iterate over the smaller set, because there are only $\Oh(n)$ new edges 
created during the whole execution of the algorithm, so we can afford to explicitly iterate over the nodes
that should be removed, that is, over $C$ or $C\setminus C_r$. For every removed node,
we also subtract the number of leaves in its subtree (extracted in $\Oh(\log n)$ time)
from the appropriate sum. Overall, this adds $\Oh(n\log n)$ per boundary node to the time complexity,
so only amortized $\Oh(k n^{0.5} \log n)$ overall.
\end{proof}

\bibliographystyle{abbrv}
\bibliography{greedy_consensus}

\end{document}